\documentclass[11pt]{article} %
\usepackage{pgfplots}
\pgfplotsset{width=7cm,compat=1.9}
\usepackage{authblk}
\usepackage{booktabs}
\usepackage{float}
\usepackage[letterpaper,margin=1in]{geometry}
\usepackage[scr=rsfso]{mathalfa}

\usepackage{setspace}
\usepackage{parskip} %

\usepackage{mdframed} %
\usepackage{microtype} %
\usepackage{mathpazo}
\usepackage{amsthm}
\usepackage{xspace}
\usepackage{comment}
\usepackage{lipsum}
\usepackage{amsfonts}
\usepackage{graphicx}
\usepackage{epstopdf}
\usepackage{algorithm}
\usepackage[noend]{algorithmic}
\usepackage{tikz}
\usepackage{caption}
\usepackage{subcaption}
\usepackage{amsmath}
\usepackage{hyperref}
\usepackage{amssymb}
\usepackage{dirtytalk}

\usepackage[colorinlistoftodos,textsize=tiny,textwidth=2cm,color=green!50!gray]{todonotes}
\usetikzlibrary{decorations.pathmorphing}
\ifpdf
  \DeclareGraphicsExtensions{.eps,.pdf,.png,.jpg}
\else
  \DeclareGraphicsExtensions{.eps}
\fi

\newcommand{\cfl}{\textsc{CFL}\xspace}
\newcommand{\ucfl}{\textsc{U-CFL}\xspace}
\newcommand{\ucflo}{\textsc{U-CFL-O}\xspace}
\newcommand{\cost}{\textrm{cost}\xspace}

\let\epsilon=\varepsilon %

\newtheorem{theorem}{Theorem} 
\newtheorem{lemma}[theorem]{Lemma} %

\usepackage[inline]{enumitem}
\setlist[enumerate,1]{label=(\roman*)}
\setlist[enumerate,2]{label=(\alph*)}

\newlist{stepsroman}{enumerate}{1}
	\setlist[stepsroman]{label=(\roman*)}

\newlist{stepsarabic}{enumerate}{1}
	\setlist[stepsarabic]{label=\arabic*., ref=\arabic*}

\newlist{stepsalph}{enumerate}{1}
	\setlist[stepsalph]{label=(\alph*), ref=(\alph*)}

\usepackage[capitalize, nameinlink]{cleveref}
\crefname{theorem}{Theorem}{Theorems}
\Crefname{lemma}{Lemma}{Lemmas}
\Crefname{claim}{Claim}{Claims}
\Crefname{fact}{Fact}{Facts}
\Crefname{remark}{Remark}{Remarks}
\Crefname{observation}{Observation}{Observations}
\Crefname{figure}{Figure}{Figures}
\Crefname{line}{Line}{Lines}
\Crefname{algocf}{Algorithm}{Algorithms}
\crefalias{AlgoLine}{line}
\Crefname{stepsromani}{Step}{Steps}
\Crefname{stepsarabici}{Step}{Steps}
\Crefname{stepsalph}{Step}{Steps}

\AddToHook{env/lemma/begin}{\crefalias{theorem}{lemma}}
\AddToHook{env/claim/begin}{\crefalias{theorem}{claim}}
\AddToHook{env/observation/begin}{\crefalias{theorem}{observation}}
\AddToHook{env/definition/begin}{\crefalias{theorem}{definition}}

\def\DEBUG{true} 
\ifdefined\DEBUG

\newcommand{\ramr}[1]{\todo[color=blue!100!white!50]{R: #1}}
\newcommand{\zacr}[1]{\todo[color=orange!100!black!50]{Z: #1}}
\newcommand{\fabr}[1]{\todo[color=red!100!black!50]{F: #1}}
\newcommand{\mrsr}[1]{\todo[color=green!100!black!50]{M: #1}}
\newcommand{\kinr}[1]{\todo[color=cyan!100!black!50]{K: #1}}
\else

\newcommand{\ramr}[1]{}

\newcommand{\fabr}[1]{}

\newcommand{\zacr}[1]{}

\newcommand{\mrsr}[1]{}

\newcommand{\kinr}[1]{}

\fi

\usepackage[
backend=biber,
style=numeric,
citestyle=numeric,
maxalphanames=6,
maxcitenames=99,
mincitenames=98,
maxbibnames=99,
giveninits=true,
url=false,
doi=true,
isbn=true,
backref=true
]{biblatex}

\begin{document}


\title{Multiple Swap Local Search Approximations for Uniform-Cost Capacitated Facility Location with Extensions to Outliers} 

\author{Zachary Friggstad}
\author{Pengfei Wu}

\affil{Department of Computing Science, University of Alberta}

\date{}

\maketitle

\thispagestyle{empty}
\begin{abstract}
We study \textsc{Capacitated Facility Location}, a fundamental clustering problem, in the setting of uniform opening costs and non-uniform capacities.
We give a polynomial-time $(3+\epsilon)$-approximation for this problem.
Our algorithm is a multiple-swap generalization of the local search algorithm by Dabas, Garg, and Gupta (2024), who obtained a $3.732$-approximation.
The main technical contribution is the analysis of this natural multiple-swap local-search method: variants of this approach have been analyzed previously, but allowing larger swaps requires new grouping and counting arguments.
We also show that a multiple-swap local search algorithm provides a $(3+\epsilon)$-approximation for \textsc{Capacitated Facility Location with Outliers}, again in the same setting of uniform opening costs and non-uniform capacities.
This is an improvement over the $6.372$-approximation, again by Dabas, Garg, and Gupta.
Finally, we show that the factor $3$ is tight for our local-search framework, and that the same multiple-swap idea does not improve the known locality gap for the non-uniform opening-cost setting.
\end{abstract}

\newpage
\setcounter{page}{1}

\section{Introduction}
Facility location problems are among the most extensively studied models in operations research and clustering. At a high level, the goal is to place service providers and connect clients to these service providers while minimizing costs. In this paper, we focus on the case where each facility can only serve a limited number of clients.
Specifically, in the \textsc{Capacitated Facility Location Problem} (\cfl), we are given a finite collection of $n$ points $V$ that is partitioned into subsets $V_C \subseteq V$ (the clients) and $V_F \subseteq V$ (the facilities) along with metric distances $d(i,j), i,j \in V$.
Each $i \in V_F$ has an associated opening cost $f_i \geq 0$ and capacity $u_i \geq 0$.

The goal is to select some facilities $S \subseteq V_F$ and an assignment $\sigma : V_C \rightarrow S$ satisfying $|\sigma^{-1}(i)| \leq u_i$ for each $i \in S$ while minimizing $\sum_{i \in S} f_i + \sum_{j \in V_C} d(j, \sigma(j))$.
That is, we want to minimize the cost of opening all facilities plus the cost of connecting clients to their prescribed facilities. We emphasize that we are only allowed to open each facility once, i.e. we consider the {\em hard-capacitated} setting.

For a given $S \subseteq V_F$, finding the cheapest assignment $\sigma : V_C \rightarrow S$ can be done in polynomial time by solving a minimum-cost transportation problem: throughout this paper we let $\sigma_S$ denote any one particular minimum-cost assignment for $S$.
For brevity, for any $S \subseteq F$ we define
\[ \cost(S) := \left\{
\begin{array}{cl}
\displaystyle \sum_{i \in S} f_i + \sum_{j \in V_C} d(j,\sigma_S(j)) & \text{ if } \sum_{i \in S} u_i \geq |V_C| \\
\infty & \text{ otherwise}
\end{array}\
\right.
\]
So, compactly, the goal is to find $S \subseteq V_F$ minimizing $\cost(S)$.
Following a series of work on this problem (see Section \ref{sec:prev}), Bansal, Garg, and Gupta \cite{bansal} gave the current-best approximation for \cfl with an approximation guarantee of $5$.

In this paper, we focus on the special case of \cfl where all facilities have the same opening cost. That is, in \textsc{Capacitated Facility Location Problem with Uniform Opening Costs} (\ucfl) there is a single value $f$ such that $f_i = f$ for each $i \in V_F$.
Prior to our work on this problem, the best approximation algorithm was a $3.732$-approximation by Dabas, Garg, and Gupta \cite{CFLIPCO} using a local search algorithm. We still consider non-uniform capacities, it is just the opening costs that are identical.

Finally, we also consider on the case where not all clients need to be connected in a solution. In the \textsc{Capacitated Facility Location with Outliers} problem, we are additionally given a value $z \geq 0$ indicating the number of {\em outliers} permitted by a solution.
In a solution, we only need to connect $|V_C|-z$ clients to open facilities, the remaining clients are discarded at no cost. Note that given $S \subseteq V_F$, we can find cheapest assignment that assigns at least  $|V_C|-z$ clients to $S$ in a capacity-respecting fashion in polynomial time, again by solving a minimum-cost transportation problem where we include a dummy node that can accept up to $z$ units of demand with no cost. Let $Z_S \subseteq V_C$ be the clients that are discarded and $\sigma_S : V_C-Z_S \rightarrow S$ the assignment satisfying $|\sigma^{-1}(i)| \leq u_i$ in the best solution for $S$.

Again, we let $\cost(S)$ denote the cost of the cheapest assignment that discards at most $z$ clients, it will be clear from the context when we are referring to the version of \cfl with outliers and when we are not.
Let \ucflo denote the variant of \cfl with uniform opening costs and outliers. The best approximation for \ucflo prior to our work is a $6.372$-approximation based on local search \cite{CFLIPCO}.


\subsection{Our Results and Techniques}\label{sec:results}
Our main results are the following.

\begin{theorem}\label{thm:ucfl}
    There is a $(3+\epsilon)$-approximation for \ucfl for any constant $\epsilon > 0$.
\end{theorem}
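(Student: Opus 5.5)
The plan is to analyze a multiple-swap local search. Fix a constant $p = p(\epsilon)$. Start from any feasible $S$ (for instance $S = V_F$ if feasible). While there exist $A \subseteq S$ and $B \subseteq V_F \setminus S$ with $|A|,|B| \le p$ such that $\cost(S - A + B) \le (1-\epsilon/n^2)\cost(S)$, perform that swap. Since each step takes $n^{O(p)}$ time and the cost drops geometrically, the standard argument gives polynomial running time. At termination, $S$ is approximately locally optimal. The approximation guarantee then follows from a locality-gap bound: every $p$-swap local optimum $S$ with optimum $O$ satisfies
\[
\cost(S) \le (3+O(1/p))\,\cost(O).
\]
Writing $\cost(S) = f|S| + C_S$ and $\cost(O) = f|O| + C_O$, the goal is to add up the inequalities $\cost(S-A+B) - \cost(S) \ge 0$ over a carefully chosen family of test swaps, in the style of Dabas--Garg--Gupta. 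The target is an aggregate of the form $f|S| + C_S \le 3(f|O| + C_O) + O(1/p)\cost(O)$.

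\textbf{Step 1: the reassignment framework.} For each test swap $\langle A,B\rangle$ we need to bound $\cost(S-A+B)$ by an explicit capacity-respecting reassignment.
\begin{itemize}
\item Clients of $A$ are rerouted along paths $j \to \sigma_O(j) \to \ldots$ through a flow that sends the optimal demand of each $o \in O$ to the facilities of $S$ serving it.
\item We use the transportation flow between $\sigma_S$ and $\sigma_O$. Sending a unit from $s$ to $o$ and then to another $s'$ costs at most $d(j,s)+2d(j',o)+\ldots$. Summed properly, each client is charged $O(C_S(j)+C_O(j))$.
\item Uniform opening costs are what make the opening-cost accounting work: a swap with $|B| \le |A|$ never increases opening cost. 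This is where uniformity enters.
\end{itemize}

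\textbf{Step 2: grouping.} Consider the bipartite flow graph $H$ between $S$ and $O$ given by the transportation flow from the $\sigma_S$ loads to the $\sigma_O$ loads. We decompose it into a forest, which is possible since an extreme flow is acyclic. We then partition the trees into groups of size $O(p)$ by the standard subtree-splitting argument. For each group we form a swap closing the $S$-facilities of the group and opening the $O$-facilities, with cross-group edges charged to a small fraction via averaging over $p$ shifted partitions. The large-swap structure replaces the ``heavy/light facility'' case distinctions with $1$- or $2$-swaps that are the source of the $2+\sqrt3$ bound. With groups of size $p$, each $s \in S$ is closed about once. Each $o$ is opened about once, plus $1/p$ extra. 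Each flow edge is rerouted through a bounded number of hops, which yields the factor $3$ on connection costs.

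\textbf{Step 3: counting and the main obstacle.} Summing the test swaps should give $f|S| + C_S \le f|O|(1+O(1/p)) + C_S + \ldots$. To make $C_S$ cancel correctly and leave a factor $3$, the clients of each closed $s$ must be reassigned with cost at most $C_O(j) + (\text{path through } o \text{ back to an open } s')$, costing $2C_O + C_S$ in total. Some clients cannot be absorbed inside the group because of capacity limits, for example when $o$'s capacity is much less than the load of $s$. Those must be sent to facilities of $S$ outside the swap, and we must ensure each such external $s'$ is charged only $O(1)$ times overall. I expect this counting step to be the hardest: defining groups so that overflow traffic leaving a group is charged boundedly, and treating high-degree facilities in the trees by splitting their children into batches of size $p$. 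I would try to handle it by rooting each tree at an $O$-facility, grouping the children of each node into consecutive blocks, and keeping the parent open to absorb the residual flow. Combining everything with the $(1-\epsilon/n^2)$ approximate local optimality and choosing $p = \Theta(1/\epsilon)$ gives the claimed $(3+\epsilon)$ guarantee.
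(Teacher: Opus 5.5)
\textbf{There is a genuine gap.} You pick the right object (the acyclic exchange forest $H$) and the right flavour of averaging over shifted partitions. But the central difficulty is left unresolved, and the remedy you sketch fails under hard capacities.

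You propose sending ``overflow'' clients of a closed facility to facilities of $S$ outside the swap, or keeping a parent open to ``absorb residual flow''. Neither is guaranteed to be possible: those facilities may already be saturated ($|\sigma_S^{-1}(i)|=u_i$). ``Charging cross-group edges to a small fraction'' fails for the same reason. If $i\in S$ is closed while some $H$-neighbour $i^*$ is not opened, the $x(i,i^*)$ clients on that edge have no guaranteed destination.

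The missing idea is to cut the forest \emph{only at $S^*$-nodes}. Root each tree at an $S^*$-node and build the subtrees so that every closed $i\in S$ has all of its $H$-neighbours opened in the same swap. Then routing along $x$ is feasible by property (3), and since each edge of $H$ lies in at most one subtree, the total reassignment cost over all test swaps is at most $\sum x(i,i^*)d(i,i^*)\le C+C^*$. The price is that boundary $S^*$-nodes are opened in several swaps. This is paid for in two ways:
\begin{itemize}
\item choosing the layer shift $\Delta$ so that $|B_\Delta|\le|R_0|+\epsilon|S^*|$;
\item splitting the non-leaf children of high-degree $S^*$-nodes into batches of $1/\epsilon$, with overhead at most $2\epsilon|S|$.
\end{itemize}

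Batching children works only for $S^*$-nodes. An $i\in S$ with many children cannot be closed in any bounded test swap with a guaranteed-feasible reassignment. The paper therefore never closes $S_\ell$, and instead bounds $|S_\ell|\le\epsilon|S^*|$ by counting disjoint children sets. That counting step is where uniform opening costs are really needed, not merely the fact that $|B|\le|A|$ does not raise opening cost. These structural requirements (depth $\Theta(1/\epsilon)$ in $S^*$-layers, branching $1/\epsilon$) are also why the swap size is $\rho=2(1/\epsilon)^{2/\epsilon}$. Your choice $p=\Theta(1/\epsilon)$ is therefore unsupported.

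\textbf{The accounting for the constant $3$ is also off.} A per-client reroute $j\to\sigma_O(j)\to s'$ ``costing $2C_O+C_S$'' with $C_S$ cancelling is an uncapacitated-style argument with no capacity guarantee here. As stated, it does not yield $3$ either. What the tree swaps actually give is
\[
(1-2\epsilon)F\le(1+2\epsilon)F^*+C+C^*.
\]
The factor $3$ appears only after combining this with the separate add-move bound $C\le C^*+F^*$, which your proposal never invokes. That combination gives $F\le(2+O(\epsilon))F^*+2C^*$, and hence $C+F\le(3+O(\epsilon))\,\mathrm{cost}(S^*)$.

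A smaller point: your moves cap $|A|\le p$, whereas the paper's test swaps may close arbitrarily many leaves of $S$ attached to a single $S^*$-node. This can be repaired by batching those leaves and re-opening the parent once per batch, with overhead $O(1/p)$ per closed leaf, but it has to be done explicitly.
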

\begin{theorem}\label{thm:ucflo}
    There is a $(3+\epsilon)$-approximation for \ucflo for any constant $\epsilon > 0$.
\end{theorem}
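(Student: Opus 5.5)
We prove \cref{thm:ucflo} with a multiple-swap local search, reusing the framework we would build for \cref{thm:ucfl}. Fix a constant $p = p(\epsilon)$. Starting from any feasible $S$ (for instance the set of all facilities), the algorithm repeatedly looks for a swap $S' = S - A + B$ with $|A|,|B| \le p$ and $\cost(S') \le (1-\epsilon/n^2)\,\cost(S)$, and applies it if one exists. Each candidate $S'$ is evaluated by the transportation problem with a dummy sink of capacity $z$. The usual geometric-improvement argument bounds the number of iterations polynomially and loses only a $(1+\epsilon)$ factor. To start from a solution of cost polynomially related to the optimum, we first guess the largest facility--client distance used by the optimum. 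The remaining work is to show that every local optimum $S$ satisfies $\cost(S)\le(3+O(1/p))\,\cost(O)$, where $O$ is an optimal set.

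\textbf{Test swaps and accounting.} Let $\sigma_S, Z_S$ and $\sigma_O, Z_O$ be the assignments and outlier sets. Following Dabas, Garg, and Gupta, we form a bipartite flow between $S$ and $O$ by routing, for each client $j$, its unit from $\sigma_S(j)$ to $\sigma_O(j)$. Outliers are matched using a dummy facility: a client in $Z_O\setminus Z_S$ may be discarded in a test swap in place of a client in $Z_S\setminus Z_O$, so the outlier budget stays balanced. We decompose this flow into a forest of stars or trees, each rooted at a facility of $S$. Facilities of $S$ whose capacity is mostly serving their own $O$-neighbours are grouped with those neighbours.

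For each group we consider the test swap that closes at most $p$ facilities of $S$ and opens their associated facilities of $O$. When the group is too large, we split it into consecutive blocks of size about $p$. Each facility of $S$ then lies in about $1+1/p$ swaps, rather than the $2$ swaps used in the single-swap analysis; this is exactly where $3.732$ improves to $3+\epsilon$. In a test swap, a client of a closed facility is reassigned either to its $O$-facility, paying $d(j,\sigma_S(j))$ of its own cost, or along a path $j \to \sigma_S(j) \to i' \to \sigma_S(j')$. By the triangle inequality this path costs at most $\mathrm{c}_S(j) + 2\,\mathrm{c}_O(j) + \dots$, where $\mathrm{c}_S(j)=d(j,\sigma_S(j))$ and $\mathrm{c}_O(j)=d(j,\sigma_O(j))$. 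Summing the inequalities $0\le \cost(S')-\cost(S)$ over all test swaps, the facility costs contribute $|O|f$ times $(1+1/p)$ and the connection costs contribute $(3+O(1/p))\sum_j \mathrm{c}_O(j)$. Since the opening costs are uniform, the facility terms reduce to counting facilities. We compensate for closing more facilities than we open by using the capacity slack through fractional charging.

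\textbf{Main obstacle.} The hardest step is the grouping and counting argument for large swaps with non-uniform capacities. We must choose the blocks so that each closed facility of $S$ can route its clients into the opened facilities of $O$ within their capacities, while each client is charged its $2\,\mathrm{c}_O$ rerouting term only $O(1/p)$ times beyond the baseline. My first approach would be to order the children of each tree by capacity. I would then take blocks greedily, pairing each closed heavy facility with the $O$-facilities of largest capacity in its block, and use a density/averaging argument over the $p$ cyclic shifts of the block boundaries, so that each boundary edge is paid for with probability $1/p$.

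The outliers add one further requirement: discarded clients must never exceed $z$ in any test swap. I would handle this by treating the dummy outlier facility as an extra $O$-facility of capacity $z$ that is never closed, so it sits inside every swap at no cost and its clients are charged $0$.
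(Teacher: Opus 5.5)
There is a genuine gap in how you handle the outlier budget, and this is the part of the theorem that goes beyond the outlier-free case. Your dummy facility of capacity $z$ that ``is never closed and sits inside every swap at no cost'' cannot absorb anything, because in the local optimum $S$ it already holds all $|Z_S|=z$ discarded clients. Consider a test swap that closes $\sigma_S(j)$ for some $j\in Z_{S^*}$. That client has no $S^*$-facility to be charged to, so the test reassignment must discard it. This pushes the outlier count to $z+1$ unless, in the \emph{same} swap, some client of $Z_S$ is reconnected to its facility in $S^*$, which must therefore be opened in that swap. Matching $Z_{S^*}\setminus Z_S$ with $Z_S\setminus Z_{S^*}$ is the right idea. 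But it balances the budget only if the closed facility of $j$ and the optimal facility of its partner lie in the same bounded-size swap. Nothing in your grouping ensures this, and these dependencies chain from swap to swap. In exchange-graph terms, your dummy is a node of $S\cap S^*$ of unbounded degree whose $S$-copy is saturated.

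Resolving this is the main work of the paper's proof. Subtrees are merged into parts with between $\rho/2$ and $3\rho/2$ facilities of $S^*$. An injection $\kappa:Z_{S^*}\to Z_S$ and the caterpillar $Q$ on the imbalanced parts guarantee that $j$ and $\kappa(j)$ lie in equal or adjacent parts. The ordered edges of $Q$ are cut into blocks of $1/\epsilon$ consecutive edges. Each block yields one swap that opens all $S^*$-facilities of its parts but leaves the boundary facilities open. Repeated openings and never-closed boundary facilities are then shown to cost only $O(\epsilon)(|S|+|S^*|)$, which is why $\rho'=3\rho/\epsilon$.

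The facility-cost bound itself is also not established: you label it the main obstacle and give a plan (capacity-sorted greedy blocks, averaging over shifts) rather than an argument. Capacity is not really the difficulty. Moving $x(i,i^*)$ clients along every exchange-graph edge is feasible whenever all neighbours of each closed $i$ are opened, by property (3) of $H$. The difficulties are degree and depth in the forest, and your sketch addresses none of them:
\begin{enumerate}
\item A facility of $S$ with many $S^*$-neighbours cannot be certified by any bounded test swap and must be charged by counting. The paper never closes $i\in S$ with more than $1/\epsilon$ children and uses disjointness of children to get $|S_\ell|\le\epsilon|S^*|$.
\item Deep trees must be cut with controlled overhead, which the paper does with the layer sets $B_\Delta$, $|B_\Delta|\le|R_0|+\epsilon|S^*|$.
\item $S^*$-nodes with many non-leaf children are reopened in several swaps, with at most $2\epsilon|S|$ extra openings (Lemma~\ref{lem:tech}).
\end{enumerate}

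Your neighbourhood also bounds $|A|\le p$, whereas the paper closes an arbitrary $T\subseteq S$. You would therefore have to spread the leaf children of one $S^*$-facility over many swaps, which fragments the outlier balancing further. Finally, charging along paths $j\to\sigma_S(j)\to i'\to\sigma_S(j')$ with a $(3+O(1/p))\sum_j \mathrm{c}_O(j)$ term is the uncapacitated scheme, and you do not explain why it respects capacities. The paper instead gets $C\le C^*+F^*$ from \textbf{add} moves and $(1-2\epsilon)F\le(1+2\epsilon)F^*+C+C^*$ (up to $O(\epsilon)$ terms with outliers) from the test swaps. The factor $3$ comes from combining these two bounds.
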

Of course, Theorem \ref{thm:ucflo} subsumes Theorem \ref{thm:ucfl}. We first prove Theorem \ref{thm:ucfl} as both a warm up to the ideas and because we follow the framework set out in \cite{ZacOutlier} for adapting local-search based approximations for uniform opening cost facility location problems to the extension with outliers.

As in \cite{CFLIPCO} and many previous algorithms for \cfl, our algorithms are based on local search.
At a high level, our local search algorithm begins with $S = V_F$ and iteratively considers one of two operations: either adding a single facility and recomputing the optimal assignment, or adding a set $S'\subseteq V_F$ to $S$, where $S'$ contains at most a constant number $\rho$ of facilities, and closing an arbitrary number of facilities $T$ from $S$ while limiting the assignment-cost recalculation to optimally reassign clients in $\sigma_S^{-1}(T)$ to $S'$. Specific details are described in Section~\ref{sec:Pre}. Note that our algorithm is a natural extension of the algorithm in \cite{CFLIPCO}: it attempts to open a large constant number of facilities in the swap operation, instead of just one or two as in their \ucfl and \ucflo algorithms, respectively. This larger-swap operation is natural to consider, but its analysis requires new grouping and counting arguments. In contrast to previous multi-exchange local-search approaches such as \cite{bansal, multiexchange}, where the local moves are organized around opening or closing a single facility and then using an auxiliary optimization problem to choose the corresponding set of facilities, our swap operation directly allows a large set of facilities to be opened in a single move while identifying a suitable set of currently open facilities to close. We believe that this analysis, together with our improvement and extension of the outlier framework of~\cite{ZacOutlier} to the capacitated setting, may be useful for studying related facility-location variants.

We show a locally-optimal solution has cost at most $3+o_\rho(1)$ times the optimum solution's cost. Theorem \ref{thm:ucfl} is obtained by selecting $\rho$ large enough as a function of $1/\epsilon$ and using a standard trick to ensure the algorithm only uses a polynomial number of local search steps. In Section~\ref{sec:tight}, we also show that this factor is tight for our local-search approach, and that the analogous multiple-swap method does not improve the known locality gap for non-uniform opening costs.

As with most local search algorithms for \cfl, our analysis considers an exchange graph, namely the one described in \cite{CFLIPCO}. In their argument, they bound the opening costs of facilities in one of two ways. The opening costs of facilities in $S$ that are leaves in the exchange graph are bounded using swap operations that only open one new facility $i$ and close all leaves connected to $i$. The opening costs of facilities in $S$ that are internal nodes in the exchange graph are bounded using a simple counting argument (recalling that opening costs are uniform) which does not require any test swaps. Our improvement comes from grouping many swaps together to bound the opening cost of leaves and also ``low-degree' internal nodes. We also rely on a counting argument to ensure the unaccounted facilities in $S$ (having high degree in the exchange graph) to bound the number of high-degree facilities in $S$ to $O(\epsilon) \cdot (|S| + |S^*|)$ where $S^*$ is some fixed optimal solution. The precise details are a bit more involved since it also requires careful handling of low- versus high-degree nodes of the optimum solution as well in the exchange graph and a second, unrelated counting argument is used even in the analysis of the test swaps.



Finally, to prove Theorem \ref{thm:ucflo} we use the framework in \cite{ZacOutlier} to group together swaps used in our algorithm for \ucfl to prove, with a larger but still constant value $\rho'$, that the same set of local search operations (appropriately generalized to the outliers case) yield a $(3+\epsilon)$-approximation for \ucflo. Note that there is no black-box result that can be cited from \cite{ZacOutlier} as they merely provide a template for adapting local search to handle outliers and show its effectiveness on key examples. We are demonstrating that an appropriate adaptation works in our setting, though the adaptation is not immediate. We view the resulting grouping technique and its extension to the outlier setting as potentially useful building blocks for future local-search algorithms for related facility-location variants.


\subsection{Previous Work}\label{sec:prev}
The case of uniform opening costs we consider here was first explicitly studied by Levi, Shmoys, and Swamy who gave a 5-approximation based on rounding an LP relaxation \cite{levi}. Then Aardal, van den Berg, and Gijswijt gave a $4.562$-approximation based on local search \cite{aardal}.
Kao revisited LP rounding to give a $4$-approximation \cite{kao}. Finally, Dabas, Garg, and Gupta demonstrated that local search can be used to get a $3.732$-approximation \cite{CFLIPCO} which is the best approximation prior to our work.

For general \cfl, the first constant-factor approximation for \cfl was given by P\'{a}l, Tardos, and Wexler \cite{pal} and was subsequently improved by Mahdian and Pal \cite{mahdian} (for the more general \textsc{Universal Facility Location} problem) and then by Zhang, Chen, and Yi \cite{multiexchange}.
Later, Bansal, Garg, and Gupta gave the current-best algorithm with an approximation ratio of $5$ \cite{bansal}. While the natural LP relaxation for \cfl with non-uniform opening costs has an unbounded integrality gap, approximation algorithms based on stronger LP relaxations were first given by An, Singh, and Svensson \cite{an} and later improved to a 9.0927 LP-based approximation by Kao \cite{kao}.

The only non-bicriteria previous work on \cfl with outliers is the aforementioned $6.372$-approximation in the case of \ucflo by Dabas, Garg, and Gupta \cite{CFLIPCO}, while Dabas and Gupta provided a bicriteria algorithm for \cfl with outliers\slash penalties~\cite{DabasG25}. As mentioned earlier, Friggstad et al. give a template that helps convert local search algorithms for uniform-cost facility location problems to the variant with outliers \cite{ZacOutlier}, though they only show it works in the facility location problems without capacities.


\section{The Local Search Algorithm for \ucfl and the Exchange Graph}\label{sec:Pre}
In this section and the next, we focus on \ucfl. The details for \ucflo are presented in Section \ref{sec:outliers}. Throughout, we assume $\sum_{i \in V_F} u_i$ is at least the number of clients that are required to be served, i.e. at least $|V_C|$ for \ucfl and at least $|V_C|-z$ for \ucflo, otherwise there can be no feasible solution.

Our local search algorithm maintains some $S \subseteq V_F$, initially setting $S = V_F$ itself. In each iteration, it tries one of the following operations:
\begin{itemize}
\item {\bf add($i)$}: If there is some $i \in V_F - S$ such that $\cost(S \cup \{i\}) < \cost(S)$, we update $S \gets S \cup \{i\}$.
\item {\bf swap($S',T$)}: Consider some $S' \subseteq V_F$ with $|S'| \leq \rho$ and some $T \subseteq S$. Suppose we first close $T$ (i.e. remove from $S$) and then open $S'$, noting any $i \in S' \cap S$ remains open. Then we compute the optimal way to assign all clients $j \in \sigma_S^{-1}(T)$ from their current location $\sigma_S(j)$ to $S'$ while ensuring each $i \in S'$ receives at most $u_i$ clients.
If this results in a solution cheaper than $\cost(S)$, we update $S \gets (S - T) \cup S'$.
\end{itemize}
If there are multiple choices for such an operation, we may choose any. We emphasize that after some application of {\bf swap($S',T$)}, we still recompute $\sigma_S$ for the new $S \subseteq V_F$ in preparation for future iterations, but determining if {\bf swap($S',T$)} can be performed
depends only on the reassignment of $\sigma^{-1}(T)$ to $S'$.

Concerning the running time, we first explain how a single iteration of the local search algorithm can be executed to run in polynomial time. We can simply enumerate all $i \in V_F - S$ and compute $\cost(S \cup \{i\})$ in polynomial time to determine if such an operation would find a cheaper solution.
Searching for an admissable {\bf swap($S',T$)} operation is more involved. There are $O(n^\rho)$ subsets $S'$ to be tried, we enumerate them all. Then we solve an auxiliary optimization problem to find the best $T$, i.e. the set $T \subseteq S$ that would lead to the most improved solution after
reassigning $\sigma^{-1}(T)$ optimally to $S$, accounting also for the change $f \cdot |S'| - f \cdot |T|$ in opening cost. This can be viewed as a variant of the multiple knapsack problem where every item (i.e. center that we may consider closing) has polynomially-bounded integer demands that can be integrally split between multiple knapsacks. A standard dynamic programming algorithm is used to find the best $T$ in $n^{O(\rho)}$ time. The full details appear in Appendix \ref{app:dp}.

While each iteration runs polynomial time, it is not clear that a local optimum will be reached after polynomially-many iterations. The usual trick works here. Namely, our analysis uses at most $2n$ ``test swaps'' so if we limit ourselves to operations that only find solutions with cost $\leq (1 - \epsilon/(2n))\cdot \cost(S)$ then we only lose an $\epsilon$-factor in the approximation guarantee. The number of iterations is now polynomial since every $2n/\epsilon$ iterations will result in the cost dropping by a constant factor. These details are entirely standard in local search analysis and are omitted.

From now on, let $S$ be a local optimum solution and $S^*$ be any fixed optimal solution to the \ucfl instance, recall that $\sigma_S$ and $\sigma_{S^*}$ denote optimal client assignments in the respective solutions. Throughout we let $F = f \cdot |S|$ be the opening cost and $C = \sum_{j \in V_C} d(j, \sigma_S(j))$ be the assignment cost of solution $S$. Similarly, let $F^* = f \cdot |S^*|$ and $C^* = \sum_{j \in V_C} d(j, \sigma_{S^*}(j))$

~

\noindent
{\bf The Exchange Graph}\\
As in many previous works on \cfl, we use an {\em exchange graph} to analyze the local search procedure. In our case, we use the exchange graph introduced in \cite{CFLIPCO}, who build off of the exchange graph used in \cite{pal}.
That is, in \cite{CFLIPCO} a bipartite graph $H = (S \dot\cup S^*, E)$ is described where each $i \in S \cap S^*$ has a copy on each side of $H$. We adopt the convention that when we refer to an edge $(i,i^*) \in E$ that $i \in S$ and $i^* \in S^*$.
Additionally, integer values $x(i,i^*) \geq 0$ for $(i,i^*) \in E$ are also given. The graph $H$ and the vector $x$ have the following properties
\begin{enumerate}
\item $H$ is a forest.
\item $\sum_{(i,i^*) \in E} x(i,i^*) \cdot d(i,i^*) \leq C + C^*$.
\item $x(\delta(i)) = |\sigma_S^{-1}(i)|$ for $i \in S$ and $x(\delta(i^*)) = |\sigma_{S^*}^{-1}(i^*)|$ for $i^* \in S^*$.
\item For each $i \in S \cap S^*$, $(i,i) \in E$ and one of these copies of $i$ in $H$ is a leaf. 
\end{enumerate}
Here, $\delta(\cdot)$ denotes the standard graph-theoretic cut set. At a high level, $H$ and $x$ are constructed by solving a minimum-cost transportation problem sending $|\sigma_S^{-1}(i)|$ clients from each $i \in S$ to $S^*$ so each $i^* \in S^*$ receives $|\sigma_{S^*}^{-1}(i^*)|$ clients. Edges $E$ are the support of an optimal solution, which can be taken to be acyclic by cycle-shifting arguments. The assignments $\sigma_S$ and $\sigma_{S^*}$ can be used to describe a transportation solution with cost at most $C + C^*$, so the optimal transportation solution is no worse. Finally, that one copy of each $i \in S \cap S^*$ is a leaf follows because we can assume an optimal solution sends $\min\{|\sigma^{-1}_S(i)|, |\sigma^{-1}_{S^*}(i)|\}$ clients from $i \in S$ to its copy $i \in S^*$ in $H$, thereby making one of them a leaf node.



\section{\ucfl Analysis}\label{section:CFL}
As in \cite{CFLIPCO} and most other local search work in \cfl, the following holds simply because no ${\bf add}$ operation results in a better solution.
\begin{lemma}[Lemma 4.1, Arya et al.~\cite{Arya}]
$C \leq C^* + F^*$
\end{lemma}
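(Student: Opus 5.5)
The plan is to use only local optimality against the \textbf{add} operation, applying it once for each $i^* \in S^*$, and then sum the resulting inequalities. The approach is the standard one from the uncapacitated setting, adapted to hard capacities.

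\textbf{Step 1: per-facility test moves.} Fix $i^* \in S^*$ and let $N^*(i^*) := \sigma_{S^*}^{-1}(i^*)$; note $|N^*(i^*)| \le u_{i^*}$.

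If $i^* \notin S$, consider \textbf{add($i^*$)}. Build a feasible (not necessarily optimal) assignment for $S \cup \{i^*\}$ by moving every $j \in N^*(i^*)$ to $i^*$ and leaving every other client at $\sigma_S(j)$.
\begin{itemize}
\item It respects capacities: $i^*$ receives $|N^*(i^*)| \le u_{i^*}$ clients, and every other facility only loses clients.
\item The optimal assignment is no worse, so local optimality gives
\[ 0 \le \cost(S\cup\{i^*\}) - \cost(S) \le f + \sum_{j \in N^*(i^*)} \bigl( d(j,i^*) - d(j,\sigma_S(j)) \bigr). \]
\end{itemize}

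If $i^* \in S$, no add move is needed. The assignment for $S$ that reroutes $N^*(i^*)$ to $i^*$ is not obviously capacity-feasible, because $i^*$ also still serves its other current clients. To handle this, I will argue with the transportation view of $\sigma_S$.
\begin{itemize}
\item Replace $\sigma_S$ by an optimal assignment that, among all optimal ones, maximizes agreement with $\sigma_{S^*}$.
\item Alternatively, and more robustly, sum over all $i^*$ at once. The assignment $\sigma'$ defined by $\sigma'(j) = \sigma_{S^*}(j)$ when $\sigma_{S^*}(j) \in S$, and $\sigma'(j) = \sigma_S(j)$ otherwise, may overload facilities, so this needs care.
\end{itemize}
The cleanest route is to apply a single \textbf{add} of each facility in $S^* \setminus S$ separately, and to handle clients with $\sigma_{S^*}(j) \in S$ by the inequality $f \ge 0$ combined with a simultaneous argument.

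\textbf{Step 2: combine.} I expect the $i^* \in S\cap S^*$ case to be the main obstacle. My first choice would be to observe that the assignment cost is optimal for $S$. So $C \le$ the cost of the min-cost transportation from $V_C$ to $S$, and for any fixed $i^*\in S^*\cap S$ we can view $\cost(S\cup\{i^*\}) = \cost(S)$ trivially. Then I would apply the exchange-type inequality:
\begin{itemize}
\item Let $A = S^*\cap S$. Consider the single solution $S \cup A = S$.
\item The assignment sending each $j$ with $\sigma_{S^*}(j)\in A$ to $\sigma_{S^*}(j)$ overloads $i\in A$ only by clients currently assigned to $i$ by $\sigma_S$ that are not in $N^*(i)$.
\item Those clients are displaced, and they can be rerouted along alternating chains to free capacity.
\end{itemize}
Making this rigorous is the nontrivial part. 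If it fails, I would fall back on the cited source's argument, namely Lemma 4.1 of Arya et al.\ as used in \cite{CFLIPCO}, taking the bound as given.

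\textbf{Step 3: sum.} Summing the inequalities over $i^*\in S^*$ yields
\[ 0 \le f|S^*| + \sum_j d(j,\sigma_{S^*}(j)) - \sum_j d(j,\sigma_S(j)), \]
because $\{N^*(i^*)\}_{i^*\in S^*}$ partitions $V_C$. This is exactly $C \le C^* + F^*$.
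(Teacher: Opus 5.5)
\textbf{There is a genuine gap.} Your Step~1 inequality for $i^*\in S^*\setminus S$ is valid. Step~3, however, also needs $0\le f+\sum_{j\in N^*(i^*)}\bigl(d(j,i^*)-d(j,\sigma_S(j))\bigr)$ for every $i^*\in S\cap S^*$. You never establish this, and under hard capacities it is false.

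Here is a counterexample. Put points on a line at $b=-4$, $j_1=-1$, $a=0$, $j_2=1$, $c=11$. Take $u_a=u_b=1$, $u_c=2$, $f=8$, $S=\{a,c\}$, $S^*=\{a,b\}$.
\begin{itemize}
\item The unique optimal assignments are $\sigma_S$ ($j_1\mapsto a$, $j_2\mapsto c$, cost $11$) and $\sigma_{S^*}$ ($j_1\mapsto b$, $j_2\mapsto a$, cost $4$).
\item $\mathrm{cost}(S\cup\{b\})=28>27=\mathrm{cost}(S)$, so no add move improves, and add-optimality is all the lemma uses.
\item Yet for $a\in S\cap S^*$ your required quantity is $8+1-10=-1<0$.
\end{itemize}

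Nor can the $S\cap S^*$ part be handled separately by some nonnegative bound. The clients with $\sigma_{S^*}(j)\in S$ contribute $1-10=-9$ in total. Your Step~1 test for $b$ has slack $8+3-1=10$, while the lemma's total slack is only $C^*+f|S^*\setminus S|-C=1$. So the test for $b$ must itself include the rerouting: $j_2$ moves into $a$, pushing $j_1$ out to $b$, which gives $8+(1-10)+(3-1)=1\ge 0$.

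Re-choosing $\sigma_S$ cannot help, since both assignments are unique. Deferring to the citation does not supply an argument either: the per-facility proof from the uncapacitated setting is exactly what breaks under hard capacities.

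\textbf{The missing idea} is to charge along exchange chains rather than per optimal facility.
\begin{itemize}
\item View $\sigma_S$ and $\sigma_{S^*}$ as integral flows in the transportation network on $V_C\cup S\cup S^*$. Each client has unit supply, and each facility $i$ sends to a sink with capacity $u_i$.
\item The difference $\sigma_{S^*}-\sigma_S$ is a circulation in the residual network of $\sigma_S$. Decompose it conformally into simple unit cycles.
\item A facility in $S^*\setminus S$ carries no $\sigma_S$-flow. So a residual cycle can enter it only from a client and must leave it to the sink. Since a simple cycle visits the sink at most once, each cycle contains at most one facility of $S^*\setminus S$.
\item A cycle containing no such facility is a residual cycle for the transportation problem of $S$ alone. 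Its cost is nonnegative by optimality of $\sigma_S$.
\item For $i^*\in S^*\setminus S$, augment $\sigma_S$ along all cycles through $i^*$. By conformality, every load lies between the loads of $\sigma_S$ and $\sigma_{S^*}$, so this is a capacity-feasible assignment for $S\cup\{i^*\}$. Add-optimality then says $f$ plus the total cost of these cycles is nonnegative.
\end{itemize}
Summing over all cycles gives $C^*-C\ge -f\,|S^*\setminus S|\ge -F^*$. This uses exactly one add test per $i^*\in S^*\setminus S$, consistent with the paper's remark that the analysis needs one test per $i\in S^*-S$, and it needs no separate case for $S\cap S^*$. In the example above, the single cycle is precisely the chain $j_2\to a$, $j_1\to b$.
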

We remark that this analysis requires at most $n$ ``test swaps'', one per $i \in S^* - S$. In \cite{CFLIPCO}, the bound $F \leq 3F^* + 2C^*$ is given leading to a bound of $\cost(S) \leq 4 \cdot \cost(S^*)$. A facility cost rescaling trick is used to get their final approximation guarantee of $3.732$.

Here is where we deviate in our analysis. For some constant $\epsilon > 0$ such that $1/\epsilon$ is an integer, we now set $\rho := 2(1/\epsilon)^{2/\epsilon}$. We prove
\begin{theorem}\label{thm:fac}
$(1-2\epsilon) \cdot F \leq (2+2\epsilon) \cdot F^* + 2\cdot C^*$
\end{theorem}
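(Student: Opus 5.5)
We will bound the opening cost $F = f|S|$ by charging every facility of $S$, except a small set of high-degree ones, to test swaps, and then handle the high-degree facilities with a counting argument. Fix a threshold $\Delta := 1/\epsilon$. Call a node of the exchange graph $H$ \emph{high-degree} if it has more than $\Delta$ neighbours and \emph{low-degree} otherwise. Because $H$ is a forest on $|S|+|S^*|$ nodes, it has fewer than $|S|+|S^*|$ edges, so fewer than $\epsilon(|S|+|S^*|)$ nodes are high-degree. In particular, the high-degree facilities of $S$ contribute at most $\epsilon(F+F^*)$ to the opening cost. This accounts for the $-2\epsilon F$ on the left and part of the $2\epsilon F^*$ on the right, with the second $\epsilon$ absorbing the further loss below.

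\textbf{Step 1 (grouping).} Root each tree of $H$ at a node of $S^*$. Delete the edges incident to high-degree $S^*$ nodes, and also cut the tree at depth multiples of about $2/\epsilon$ levels, choosing the offset by averaging so that the cut edges carry only an $\epsilon$-fraction of the facilities. Each remaining component then has only low-degree internal $S^*$ nodes and depth $O(1/\epsilon)$. Hence it contains at most $(1/\epsilon)^{O(1/\epsilon)} \le \rho$ facilities of $S^*$, which is exactly why $\rho$ was chosen as $2(1/\epsilon)^{2/\epsilon}$.

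\textbf{Step 2 (test swaps).} For each component $K$, perform $\textbf{swap}(S^*\cap K,\; T)$, where $T$ is the set of low-degree $S$-nodes of $K$ whose children are all in $K$. Reassign the clients of each closed $i\in T$ along the edges of $H$ according to $x$. Units with $x(i,i^*)$ on an edge to an opened $i^*\in K$ go directly to $i^*$, at cost at most $d(j,i)+d(i,i^*)$ per client $j$. Units on the edge to the parent $i^*$ are where capacity becomes an issue, and here we use the counting trick of \cite{CFLIPCO}. Because $i$ is low-degree, we route the client through a sibling or the parent, so each edge's $x$-weighted length is charged at most twice. Local optimality then gives
\[
f|T| \le f|S^*\cap K| + \sum_{j\in\sigma_S^{-1}(T)} 2d(j,\sigma_S(j)) + 2\sum_{e\in E(K)} x_e d_e .
\]
We then sum this over all components. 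To avoid double-counting client costs, we use the $C^*$ part of property 2 together with the standard decomposition $\sum_{e} x_e d_e \le C + C^*$. We must arrange the routing so that the $C$-terms cancel against the removed connection costs: a closed $i$'s clients leave $i$, so the cost $d(j,i)$ is saved as well as paid. The net effect is
\[
\sum_j d(j,\sigma_{S^*}(j))\cdot 2 \;=\; 2C^*.
\]
This should yield $\sum f|T| \le (1+\epsilon)F^* + 2C^*$ when counting each $S^*$ node's opening at most $1+\epsilon$ times.

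\textbf{Step 3 (remaining low-degree nodes).} The low-degree $S$-nodes that were never closed are internal nodes with children across a cut. By the averaging choice of cut levels, there are at most $\epsilon|S|$ of them. The internal low-degree nodes that are closed but not covered, as in \cite{CFLIPCO}, are charged by counting: every internal $S$-node has at least one $S^*$ child, so this cost is at most $F^*$. Adding up Steps 1–3 gives $(1-2\epsilon)F \le (2+2\epsilon)F^* + 2C^*$.

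The main obstacle is Step 2: reassigning clients of a closed node whose parent $S^*$ facility lacks spare capacity. The right fix is to not close an $S$-node together with its parent's other children unless all of them fit. To do this, I would order the children of each low-degree $S^*$ node by $x$-value and use the \cite{CFLIPCO} argument — close all but the largest child, and send the largest child's clients to its siblings' $S^*$ neighbours. I expect verifying that the charging stays within factor $2$ per edge after grouping to be the delicate part.
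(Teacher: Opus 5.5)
There is a genuine gap. Your bookkeeping in Steps 2--3 does not close, and when it is done honestly your argument only recovers the old bound $F \le (3+O(\epsilon))F^* + 2C^*$ of \cite{CFLIPCO}.

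A swap charges a reassigned client only $d(\sigma_S(j),i^*)$. The only global control on the routing is property (2), $\sum_e x_e d_e \le C + C^*$. Turning this into a bound in terms of $C^*$ requires $C \le C^* + F^*$, which already spends a full $F^*$. So even with every edge charged once, Step 2 yields
\[
\sum f|T| \le (1+O(\epsilon))F^* + C + C^* \le (2+O(\epsilon))F^* + 2C^*,
\]
not $(1+\epsilon)F^* + 2C^*$. Your displayed per-component inequality is worse still: summed over components it gives $2C + 2(C+C^*)$, and nothing cancels the $C$-terms. Adding your Step 3, which charges internal low-degree $S$-nodes to their $S^*$-children at the cost of another $F^*$, then gives a leading $3F^*$.

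The point of the large swaps is precisely that no such counting step is needed. Every $i \in S_s$, leaf or internal, is closed in exactly one test swap. The only $F^*$ beyond the $(1+O(\epsilon))F^*$ of openings is the one coming from $C \le C^* + F^*$.

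The second gap is in how you decompose $H$. Deleting the edges at high-degree $S^*$-nodes strands their $S$-neighbours. For instance, if one optimal facility has all of $S$ as leaf children, every $S$-node is cut off from its only $H$-neighbour and none of your swaps can close anything. Your repair, routing through siblings and charging each edge twice, would in any case double $C+C^*$ and ruin the constant.

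In the paper, each closed $i$ has \emph{all} of its $H$-neighbours (parent and children) opened in the same swap. Moving exactly $x(i,i^*)$ clients along each edge is then feasible by property (3), each edge of $H$ is charged at most once, and there is no capacity issue at the parent. This is achieved as follows:
\begin{itemize}
\item Each $i^* \in S^*_\ell$ is opened several times, once per batch of at most $1/\epsilon$ non-leaf children. The extra $\lceil \epsilon|\mathcal C(i^*)|\rceil \le 2\epsilon|\mathcal C(i^*)|$ openings sum to at most $2\epsilon|S|$, because each $S$-node has a single parent (Lemma~\ref{lem:tech}).
\item Leaves in $L$ are appended to any tree containing their parent, since they do not enlarge $S'$.
\item Depth is controlled by cutting at a sparse set $B_\Delta$ of $S^*$-layers whose nodes are opened twice, at an extra cost of $\epsilon|S^*|$.
\item $S$-nodes with more than $1/\epsilon$ children are left open, costing $f|S_\ell| \le \epsilon F^*$ by counting children (Lemma~\ref{lem:big}).
\end{itemize}
Two smaller points: your degree count should read $<2\epsilon(|S|+|S^*|)$, not $\epsilon(|S|+|S^*|)$. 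Also, your Step 1 never removes high-degree $S$-nodes, so your components need not contain at most $\rho$ facilities of $S^*$.
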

From this, we get $\cost(S) = C + F \leq \frac{3\cdot C^* + (3+2\epsilon) \cdot F^*}{1-2\epsilon} \leq (3+9\epsilon) \cdot \cost(S^*)$ (for sufficiently small $\epsilon$).
This completes the proof of Theorem \ref{thm:ucfl} after shrinking $\epsilon$ by a factor of 9.

Theorem \ref{thm:fac} is proven by identifying a collection of at most $n$ different {\bf swap$(S',T)$} operations based on the exchange graph. Consider a component of $H$. There must be some facility in $S^*$ in this component, otherwise it is a singleton component corresponding to some $i \in S-S^*$ with $\sigma^{-1}_S(i) = \emptyset$ so {\bf swap$(\emptyset, \{i\})$} would be an improving operation as the cost would decrease by exactly $f$ for closing $i$.
We view each component of $H$ as being rooted at some facility in $S^*$.

Let $L \subseteq S$ be the facilities in $S$ that are leaf nodes in $H$. For each $i \in S \dot\cup S^*$, let $\mathcal C(i)$ be the children of $i$. 
Let $S_s = \{i \in S : |\mathcal C(i)| \leq 1/\epsilon\}$ and $S_\ell = \{i \in S : |\mathcal C(i)| > 1/\epsilon\}$, this partitions $S$ into facilities with a ``small'' number of children and a ``large'' number of children.
In a similar but not identical manner, we define $S^*_s = \{i \in S^* : |\mathcal C(i) - L| \leq 1/\epsilon\}$ and $S^*_\ell = S^* - S^*_s$ partition $S^*$ into facilities with a ``small'' number of {\em non-leaf} children and a ``large'' number of non-leaf children.

\begin{lemma}\label{lem:big}
$f \cdot |S_\ell| \leq \epsilon \cdot f \cdot |S^*|$
\end{lemma}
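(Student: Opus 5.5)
This is a pure counting statement: after dividing by $f$, it says $|S_\ell| \leq \epsilon |S^*|$. I will prove it by charging each facility of $S_\ell$ to its children in the rooted forest $H$. The key structural fact is that $H$ is bipartite between $S$ and $S^*$. Hence every child of a node $i \in S$ is a node on the $S^*$ side, so $\mathcal C(i) \subseteq S^*$ (viewed as copies on the $S^*$ side of $H$).

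First I will check that the children sets $\mathcal C(i)$ for distinct $i \in S$ are pairwise disjoint. This holds because each component of $H$ is a rooted tree, so every non-root node has exactly one parent. Consequently
\[ \sum_{i \in S} |\mathcal C(i)| \leq |S^*|, \]
since the sets $\mathcal C(i)$, $i \in S$, are disjoint subsets of the $S^*$-side vertices. In fact the roots of the components are themselves in $S^*$ and are nobody's child, so the inequality is not even tight, though we do not need that.

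Next I will restrict the sum to $S_\ell$. Every $i \in S_\ell$ satisfies $|\mathcal C(i)| > 1/\epsilon$ by definition, so
\[ |S_\ell| \cdot \frac{1}{\epsilon} < \sum_{i \in S_\ell} |\mathcal C(i)| \leq \sum_{i \in S} |\mathcal C(i)| \leq |S^*|. \]
This gives $|S_\ell| \leq \epsilon |S^*|$ (strictly, when $S_\ell \neq \emptyset$), and multiplying by $f$ yields the lemma.

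The only point needing care is the convention for facilities in $S \cap S^*$. Each such facility has two distinct copies in $H$, and $|S^*|$ counts the copies on the $S^*$ side. The children of $S$-side nodes are always among these $S^*$-side copies, so the counting stays valid. No local-optimality property is needed for this lemma. It is purely a consequence of $H$ being a bipartite forest rooted at $S^*$ nodes.
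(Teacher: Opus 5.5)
Your proposal is correct and is essentially the paper's argument: each $i \in S_\ell$ has more than $1/\epsilon$ children, all on the $S^*$ side of the bipartite forest $H$, and the child sets $\mathcal C(i)$ are pairwise disjoint because every node has at most one parent, so $|S_\ell|/\epsilon \leq |S^*|$. You are also right that the counting needs no local optimality; in fact it does not even need the roots to lie in $S^*$, since any rooting of a bipartite forest gives disjoint child sets on the opposite side.
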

\begin{proof}
This is simply because every $i \in S_\ell$ has at least $1/\epsilon$ children in $S^*$. This does not double-count, i.e. $\mathcal C(i) \cap \mathcal C(i')=\emptyset$ for any distinct $i,i' \in S \dot\cup S^*$.
\end{proof}

All that is left is to bound the opening cost of $S_s$. To do this, we consider the subgraph $H'$ of $H$ obtained by deleting $S_\ell \cup L$ (only the copy from $S$ but leaving the copy in $S^*$ if it was in both), i.e. the induced subgraph $H' := H[(S - S_\ell - L) \dot\cup S^*]$. Each component of $H'$
can still be viewed as being rooted in $S^*$ since the original trees of $H$ were and we only deleted nodes in $S$. We will decompose $H'$ into subtrees with limited overlap, each containing at most $\rho$ nodes from $S^*$.

Let $R_0, R_1, R_2, \ldots$ denote the layers of nodes of $S^*$ in the trees of $H'$. That is, $R_0$ is all root nodes in $H'$ and, inductively, for $m \geq 1$ we have $R_m$ is all grandchildren of nodes in $R_{m-1}$. For some $\Delta \in [0,1/\epsilon)$, consider $B_\Delta := R_0 \bigcup \cup_{m' \geq 0} R_{\Delta + m' / \epsilon}$. Since the $R_m$ partition $S^*$, there is some $\Delta$ such that $|B_\Delta| \leq |R_0| + \epsilon \cdot |S^*|$, fix any such $\Delta$.

Now, for each $i^* \in B_\Delta \cup S^*_\ell$ consider the maximal subtree of $H'$ rooted at $i^*$ that does not contain any internal nodes from $B_\Delta \cup S^*_\ell$. That is, consider all nodes encountered in a depth-first search in the subtree under $i^*$ that does not recurse deeper from another node in $B_\Delta \cup S^*_\ell$. Let $T'_{i^*}$ denote this subtree. Note each $T'_{i^*}$ has depth at most $2/\epsilon$ and the only node in $T'_{i^*}$ that can have more than $1/\epsilon$ children in $T'_{i^*}$ is $i^*$ itself, but only if $i^* \in S^*_\ell$.

Next, for each $i^* \in B_\Delta \cup S^*_\ell$ we create subtrees of $T'_{i^*}$ as follows. Consider the children of $i^*$ in $T'_{i^*}$, i.e. $\mathcal C(i^*) - S_\ell - L$. Arbitrarily partition these children so that each part has size exactly $1/\epsilon$ except, perhaps, the last part which has size at most $1/\epsilon$.
For each part $\pi$ of the partition, we consider the subtree of $T'_{i^*}$ rooted at $i^*$ but only including the subtrees rooted at children in that part $\pi$.

Let $\mathcal T'_{i^*}$ be the collection of subtrees of $T'_{i^*}$ created this way, one per part of the partition.
Note $|\mathcal T'_{i^*}| > 1$ can only happen if $i^* \in S^*_\ell$. Each tree in $\mathcal T'_{i^*}$ has height at most $2/\epsilon$ and each node has at most $1/\epsilon$ children, so the number of nodes is at most $2(1/\epsilon)^{2/\epsilon}$.
Finally let $\mathcal T' = \cup_{i^* \in B_\Delta \cup S^*_\ell} \mathcal T'_{i^*}$ denote all such trees created this way.
Observe each $i^* \in S^*$ appears in at least one tree in $\mathcal T'$ and can only appear in multiple if $i^* \in B_\Delta \cup S^*_\ell$. This construction is depicted in Figure~\ref{fig:trees}.

Finally, for each $i \in L$ pick a single tree in $\mathcal T'$ that contains the parent of $i$ and add $i$ along with its only incident edge to this tree. Let $\mathcal T$ denote these final trees, each tree in $\mathcal T$ is a tree in $\mathcal T'$ with, perhaps, some leaves in $L$ appended to it. By construction, each $i \in S_s$ appears in precisely one tree of $\mathcal T$: if $i \in L$ it was only included in the final formation of $\mathcal T$ from $\mathcal T'$ and if $i \in S_s-L$ it was an internal node for precisely one tree in $\mathcal T'$.
Finally, we note each edge $(i,i^*)$ of the original exchange graph $H$ itself is contained in at most one subtree, in fact in exactly one subtree if and only if $i \in S_s$.

\begin{figure}
    \centering
    \includegraphics[width=0.5\textwidth]{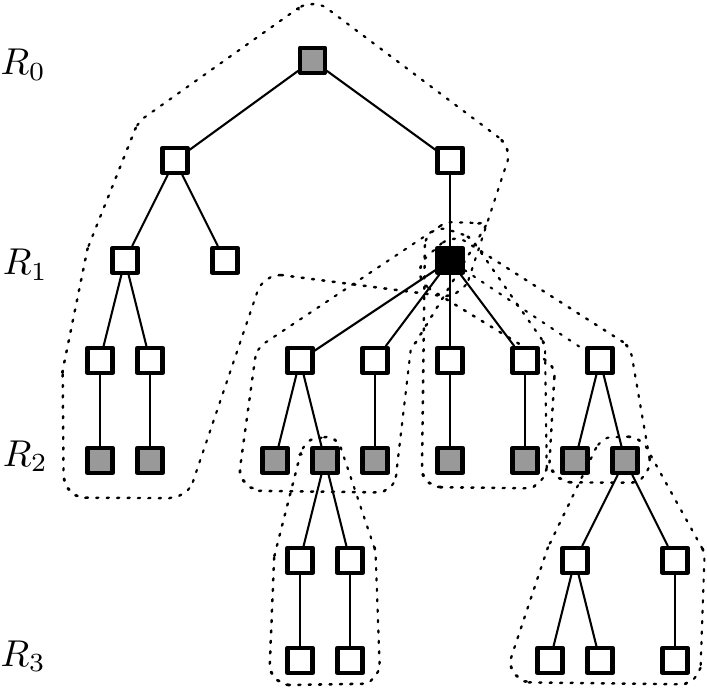} 
    \caption{A decomposition of $H'$ into subtrees $\mathcal T'$ (the leaves $L$ added to form $\mathcal T$ are not depicted) for $1/\epsilon = 2$. Nodes in layer labeled by some $R_i$ in $S^*$, nodes between these layers are in $S_s$. The grey nodes are in $B_\Delta$ (using $\Delta = 0$) and the black node is in $S^*_\ell$.
    The subtrees $\mathcal T'$ are indicated by dashed contours, each node in $B_\Delta-S^*_\ell$ is the root of exactly one subtree, the only node in $S^*_\ell$ is the root of three subtrees. If we used $\Delta = 1$ instead, then we would still have the top node in $B_\Delta$ by construction.}
    \label{fig:trees} 
\end{figure}

Consider a tree $\tau \in \mathcal T$ and, for brevity, let $S' = \tau \cap S^*$ and $T = \tau \cap S$. As noted above, $\tau$ has at most $\rho = 2(1/\epsilon)^{2/\epsilon}$ nodes not in $L$, so $|S'| \leq \rho$ meaning the cost change calculated in {\bf swap$(S',T)$} is nonnegative as $S$ is a local minimum.

The optimal cost of reassigning $\sigma^{-1}(T)$ to $S'$ in {\bf swap($S',T$)} is upper bounded by the following reassignment.
Note that each $i \in T$ has each of its neighbours in $S'$ by construction of the subtrees. That is, if $(i,i^*) \in E$ then $i^* \in S'$. Hence it is feasible to move all of $\sigma^{-1}(i)$ to $i^*$ for each $i \in T$ by property (3) of $H$ from Section \ref{sec:Pre}.
The cost of this reassignment is at most $x(i,i^*) \cdot d(i,i^*)$. Summing over all $i \in T$, the reassignment cost is at most the
total $x$-weighted distance of the edges of $H$ contained in this subtree.

Recalling no such swap will result in an improved solution by local optimality of $S$, summing over all subtrees and using property (2) of $H$ from Section \ref{sec:Pre} shows:
\begin{equation}\label{eqn:bound}
0 \leq f \cdot |S^*| + \sum_{i^* \in B_\Delta \cup S^*_\ell} f \cdot |\mathcal T'_{i^*}| - f \cdot |R_0| - f \cdot |S_s| + C + C^*.
\end{equation}
Here, each $i^* \in S^* - R_0$ has been counted once in the first term when it is a non-root of the corresponding subtree in $\mathcal T'$ and is counted as many times in the sum as there are subtrees rooted at $i^*$ in $\mathcal T'$. Each $i^* \in R_0$ is only opened $|\mathcal T'_i|$ times, once for each subtree for which it is the root, because it is not in any tree in $\mathcal T$ rooted at a shallower node. Its contribution to $|S^*|$ is offset by its negative contribution to $|R_0|$.

There is a subtlety in how the opening costs are considered for some $i \in S_s \cap S^*$. By property (4) of $H$, we have $(i,i) \in E$ and one copy of $i$ in $H$ is a leaf. If $i \in S$ is a leaf, then the swap that closes it also opens its parent $i$: the $-f$ appears in front of $|S_s|$ for this swap
and the $+f$ is accounted for in the term $|S^*|$. But there may be other swaps involving $i \in S^*$, i.e. if $i^*$ is in $B_\Delta$ or $S^*_\ell$. Since $i$ is already open when $i \in S'$ for these swaps, the upper bound on cost calculated by {\bf swap}($S',T$) for these swaps
does not include the $+f$ for this $i \in S'$. However, it is safe to include its contribution in the sum because this is an upper bound. If $i \in S^*$ is a leaf then the only actual swap is when $i \in S$ is closed as $\mathcal T'_{i} = \emptyset$ and, again, the $-f$ for closing $i$ is offset by the $+f$ for opening $i$
in the upper bound.

The final step is to bound the sum using a counting argument.
\begin{lemma}\label{lem:tech}
$\sum_{i^* \in B_\Delta \cup S_\ell^*} |\mathcal T'_{i^*}| - |R_0| \leq \epsilon \cdot |S^*| + 2\epsilon \cdot |S|$
\end{lemma}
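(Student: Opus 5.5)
We bound the left-hand side by splitting $B_\Delta \cup S^*_\ell$ into three groups and charging each group separately.

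\textbf{Nodes of $B_\Delta - S^*_\ell$.} For these $i^*$, we have $|\mathcal T'_{i^*}| \leq 1$, since $i^*$ has at most $1/\epsilon$ non-leaf children. Children in $S_\ell$ are removed as well, so only a single part is formed. Hence
\[
\sum_{i^* \in B_\Delta - S^*_\ell} |\mathcal T'_{i^*}| \leq |B_\Delta| \leq |R_0| + \epsilon |S^*|
\]
by the choice of $\Delta$. The $|R_0|$ is cancelled by the $-|R_0|$ in the statement. For roots in $R_0 \cap S^*_\ell$, one of their trees can likewise be cancelled against $R_0$. This needs a little care to avoid double use, so I would rather bound every $i^* \in B_\Delta \cup S^*_\ell$ by $1$ plus an excess term. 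The sum of the ones is at most $|B_\Delta| + |S^*_\ell|$.

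\textbf{Excess trees at nodes of $S^*_\ell$.} For $i^* \in S^*_\ell$, the partition of $\mathcal C(i^*) - S_\ell - L$ into parts of size $1/\epsilon$ (with the last possibly smaller) gives
\[
|\mathcal T'_{i^*}| \leq 1 + \epsilon\,|\mathcal C(i^*) - S_\ell - L|.
\]
Summing over $S^*_\ell$, the child sets are disjoint subsets of $S$, so the excess is at most $\epsilon |S|$.

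\textbf{The count $|S^*_\ell|$.} Each $i^* \in S^*_\ell$ has more than $1/\epsilon$ non-leaf children in $S$. These child sets are disjoint across different $i^*$, so $|S^*_\ell| \leq \epsilon |S|$.

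\textbf{Combining.} The three bounds give
\[
\sum_{i^* \in B_\Delta \cup S^*_\ell} |\mathcal T'_{i^*}| \leq |B_\Delta| + |S^*_\ell| + \epsilon|S| \leq |R_0| + \epsilon|S^*| + 2\epsilon|S|,
\]
which is the claim.

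The main point needing care is the per-node bound $|\mathcal T'_{i^*}| \leq 1 + \epsilon\,|\mathcal C(i^*) - S_\ell - L|$. We need $\lceil k\epsilon \rceil \leq 1 + k\epsilon$, which always holds. We also need to treat the case where $i^*$ has no children in $T'_{i^*}$: there we might have $|\mathcal T'_{i^*}| = 0$ or $1$, and either way it is bounded by $1$.

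We also must check that counting $i^* \in B_\Delta \cap S^*_\ell$ once in $|B_\Delta| + |S^*_\ell|$ is only an overcount, which is harmless for an upper bound. Finally, the disjointness of children of distinct nodes, noted in the proof of \Cref{lem:big}, is what justifies both sums over $S$.
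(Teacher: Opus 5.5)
Your proof is correct and is essentially the paper's argument. It uses $|\mathcal T'_{i^*}|\le 1$ on $B_\Delta - S^*_\ell$ together with $|B_\Delta|\le |R_0|+\epsilon|S^*|$, and for $S^*_\ell$ it charges the trees to the disjoint child sets in $S$. The only difference is bookkeeping: the paper absorbs the rounding per node via $\lceil \epsilon|\mathcal C(i^*)|\rceil\le 2\epsilon|\mathcal C(i^*)|$ (using $|\mathcal C(i^*)|>1/\epsilon$), while you write $1+\epsilon k$ and pay for the ones with $|S^*_\ell|\le\epsilon|S|$, which is the same fact summed over $S^*_\ell$.
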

\begin{proof}
If $i^* \in B_\Delta - S^*_\ell$ then $|\mathcal T'_{i^*}| \leq 1$; it could be zero if there are no non-leaf children of $i^*$ in $H'$. So $\sum_{i^* \in B_\Delta - S^*_\ell} |\mathcal T'_{i^*}| \leq |B_\Delta| \leq |R_0| + \epsilon \cdot |S^*|$.

For $i^* \in S^*_\ell$, $|\mathcal T'_{i^*}| = \lceil |\epsilon \cdot \mathcal C(i^*)| \rceil \leq 2\epsilon \cdot |\mathcal C(i^*)|$.
Since each $i \in S$ is the child of exactly one facility in $S^*$, $\sum_{i \in S^*_\ell} |\mathcal T'_{i^*}| \leq 2\epsilon \cdot |S|$ which completes the proof.
\end{proof}
Applying this to \eqref{eqn:bound} and rearranging shows
\[
f\cdot |S_s| - 2\epsilon \cdot f \cdot |S|
\leq (1+\epsilon) \cdot F^* + C + C^*.
\]
Applying Lemma~\ref{lem:big} to the preceding inequality gives
\[
(1-2\epsilon) \cdot f \cdot |S|
\leq (1+2\epsilon) \cdot F^* + C + C^*
\leq (2+2\epsilon) \cdot F^* + 2 \cdot C^*,
\]
which completes the proof of Theorem~\ref{thm:fac}.


\section{\ucflo Algorithm and Analysis}\label{sec:outliers}
Recall that for each $S \subseteq V_F$, we can compute $\cost(S)$ along with an optimal way to discard at most $z$ clients and assign the remaining clients to $S$ in polynomial time. We let $|Z_S|$ denote the clients that are discarded and $\sigma_S : V_C - Z_S \rightarrow S$ be the assignment of the rest. Of course, we assume $z < |V_C|$ as otherwise we may discard all clients. Given this, we may then assume $|Z_S| = z$ as well since we could always discard clients until exactly $z$ are discarded without increasing the cost of a solution.

The local search operations we consider are the following. Let $\rho'$ be some constant that we will fix later.
\begin{itemize}
\item \textbf{add$(i)$} for $i \in V_F - S$: This operation is the same as in Section~\ref{section:CFL}, with the understanding that our computation of $\cost(S)$ is now for the outliers case.
\item \textbf{swap$(S',T)$}: For $S' \subseteq V_F$ with $|S'| \leq \rho'$ and $T \subseteq S$, this operation first closes $T$ and then opens $S'$ as before (leaving any $i \in S' \cap T$ open). It then finds an optimal way to assign some clients in $\sigma_S^{-1}(T)$ to $S'$ while discarding the rest as outliers, and to connect some clients in $Z_S$ to $S'$, while ensuring that the number of outliers remains at most $z$. The clients $j \in \sigma_S^{-1}(T)$ that are reassigned to $S'$ pay the additional distance from $\sigma_S(j)$ to their new facility in $S'$ when we compute the cost of this reassignment. We show how to compute the optimum swap operation of this form in polynomial time in Appendix \ref{app:dp}.
\end{itemize}

As before, let $S$ now denote a local optimum solution, $S^*$ a global optimum solution, and $C,F,C^*,F^*$ the respective client/facility costs of these solutions. We emphasize $Z_S$ and $Z_{S^*}$ are the outliers in the respective solutions, that $|Z_S| = |Z_{S^*}| = z$, and that $\sigma_S, \sigma_{S^*}$ are the assignments of the non-outliers in the respective solutions. We assume $Z_S \cap Z_{S^*} = \emptyset$, as $S$ would still be a local optimum in the instance with $Z_S \cap Z_{S^*}$ removed and $z$ adjusted to $z - |Z_S \cap Z_{S^*}|$, and $S^*$ would still be a global optimum solution in this adjusted instance. Finally, for each $i \in S$, let $Z_{S^*}(i) = \sigma_S(i) \cap Z_{S^*}$, i.e. the clients assigned to $i$ in the local optimum solution that are outliers in the global optimum. Similarly for $i^* \in S^*$, let $Z_S(i^*) = \sigma_{S^*}(i^*) \cap Z_S$.

The bound on the assignment cost of a locally optimal solution obtained from the \textbf{add} operation follows from previous work, and is very similar to the classic argument in \cfl.
\begin{lemma} [Lemma 6 in~\cite{CFLIPCO}]\label{Lemma:OutCcost}
    $C \leq C^* + F^*$
\end{lemma}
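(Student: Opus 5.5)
We plan to prove Lemma~\ref{Lemma:OutCcost} with the standard \textbf{add}-based argument, adapted to the outlier setting. For each $i^* \in S^*$ we will use the test operation \textbf{add}$(i^*)$; for $i^* \in S \cap S^*$ it is vacuous, and we instead use the trivial fact that reassigning to an already open facility costs nothing extra. Local optimality gives $\cost(S \cup \{i^*\}) \geq \cost(S)$. We will upper bound $\cost(S\cup\{i^*\})$ by exhibiting an explicit feasible assignment that uses at most $z$ outliers. Summing over $i^*$ then yields $0 \leq F^* + C^* - C$ after cancellation.

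\textbf{The reassignment.} Given $i^*$ opened alongside $S$, every client $j \in \sigma_{S^*}^{-1}(i^*)$ is moved to $i^*$.
\begin{itemize}
\item If $j$ is non-outlier in $S$, its cost changes from $d(j,\sigma_S(j))$ to $d(j,\sigma_{S^*}(j))$.
\item If $j \in Z_S(i^*)$, the client was an outlier in $S$ and is now served at cost $d(j,i^*)$.
\end{itemize}
Capacities are respected, since $i^*$ receives exactly $|\sigma_{S^*}^{-1}(i^*)| \le u_{i^*}$ clients, and no facility in $S$ gains clients. Serving the $|Z_S(i^*)|$ former outliers, however, frees outlier budget. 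That budget is not needed, but the count must stay at most $z$. It does, because we only reduce the number of outliers.

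\textbf{Summation.} The resulting inequality for each $i^*$ is
\[0 \leq f + \sum_{j\in\sigma_{S^*}^{-1}(i^*)\setminus Z_S} \bigl(d(j,\sigma_{S^*}(j)) - d(j,\sigma_S(j))\bigr) + \sum_{j\in Z_S(i^*)} d(j,\sigma_{S^*}(j)).\]
Summing over $i^*\in S^*$ gives $0 \le F^* + C^* - \sum_{j \notin Z_S\cup Z_{S^*}} d(j,\sigma_S(j))$. Clients in $Z_{S^*}\setminus Z_S$ are never counted in the negative term, so the sum falls short of $C$ by $\sum_{j\in Z_{S^*}} d(j,\sigma_S(j))$. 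This is the main obstacle: those clients' cost in $S$ must be shown to be nonpositive in net, by also discarding them.

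\textbf{Resolving the obstacle.} We modify the test swap. Using $Z_S \cap Z_{S^*}=\emptyset$, we have $|Z_S| = |Z_{S^*}| = z$. We pair the clients of $Z_S$ bijectively with those of $Z_{S^*}$, and refine the pairing so that for each $i^*$ the clients of $Z_S(i^*)$ are matched to some set $P(i^*) \subseteq Z_{S^*}$. When \textbf{add}$(i^*)$ serves $Z_S(i^*)$, it simultaneously discards $P(i^*)$, which keeps the outlier count at exactly $z$ and gains $-\sum_{j\in P(i^*)} d(j,\sigma_S(j))$. Since the $P(i^*)$ partition $Z_{S^*}$, summing recovers the missing terms and gives $0 \le F^* + C^* - C$, i.e.\ $C \le C^* + F^*$.

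Actually no balance is needed: even discarding all of $Z_{S^*}$-type clients only requires the budget freed. The one point to verify is that the counts match; this follows because $\sum_{i^*}|Z_S(i^*)| = z = |Z_{S^*}|$. The inequality is then a valid upper bound on $\cost(S\cup\{i^*\})$, since the optimal assignment with at most $z$ outliers is no worse than our explicit one.
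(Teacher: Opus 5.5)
The gap is in your treatment of $i^* \in S\cap S^*$. For such $i^*$ there is no \textbf{add} move, so the inequality you sum for it can only come from optimality of the assignment for the fixed set $S$. That requires your reassignment to be feasible, and with hard capacities it need not be. The facility $i^*$ is already serving $\sigma_S^{-1}(i^*)$, which may contain clients that $S^*$ sends elsewhere, so moving all of $\sigma_{S^*}^{-1}(i^*)$ (including $Z_S(i^*)$) onto $i^*$ can exceed $u_{i^*}$. Discarding $P(i^*)$ does not help, because it frees capacity at the facilities $\sigma_S(P(i^*))$, not at $i^*$; ``costs nothing extra'' is uncapacitated intuition.

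The per-facility inequality can in fact fail. Take $z=0$, $f=2$, all capacities $1$, points $B,b,A,a$ at positions $-10,-1,0,1$ on a line, and a facility $C$ with $d(C,A)=d(C,a)=8$, $d(C,b)=9$, $d(C,B)=15$ (a valid metric).
\begin{itemize}
\item $S=\{A,B\}$ with $a\mapsto A$, $b\mapsto B$ has cost $14$. No \textbf{add} improves it, nor does any \textbf{swap} as the paper defines it.
\item $S^*=\{A,C\}$ with $a\mapsto C$, $b\mapsto A$ is optimal with cost $13$.
\item Both assignments are the unique optimal ones for their sets.
\end{itemize}
For $i^*=A$ your inequality reads $0\le 2+d(b,A)-d(b,B)=-6$. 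The lemma itself still holds here ($10\le 13$), but only because the slack from \textbf{add}$(C)$ pays for the deficit at $A$. So charging facility by facility along $\sigma_{S^*}^{-1}(i^*)$ cannot work.

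The standard repair is a path/cycle decomposition, which also absorbs your outlier pairing. Introduce a dummy node $o$. For each client $j$, add an arc from $\sigma_S(j)$ (or $o$ if $j\in Z_S$) to $\sigma_{S^*}(j)$ (or $o$ if $j\in Z_{S^*}$). Facility $i$ has out-degree $|\sigma_S^{-1}(i)|$ and in-degree $|\sigma_{S^*}^{-1}(i)|$, and $o$ is balanced since $|Z_S|=|Z_{S^*}|=z$. Hence the arcs split into cycles and paths, where each path ends at a facility $i\in S^*$ with $|\sigma_S^{-1}(i)|<|\sigma_{S^*}^{-1}(i)|$ and has all internal nodes in $(S\cap S^*)\cup\{o\}$.
\begin{itemize}
\item Shift the clients of one cycle, or of one path ending in $S$, to their status in $S^*$. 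This keeps exactly $z$ outliers, since each pass through $o$ trades one outlier for one served client. It respects capacities, since internal loads are unchanged and the endpoint has $|\sigma_S^{-1}(i)|<|\sigma_{S^*}^{-1}(i)|\le u_i$. So its cost change is nonnegative by optimality for $S$.
\item The paths ending at a fixed $i^*\in S^*\setminus S$ number exactly $|\sigma_{S^*}^{-1}(i^*)|\le u_{i^*}$. Shift them together under \textbf{add}$(i^*)$, so $f$ plus their cost change is nonnegative.
\end{itemize}
Every client lies on exactly one path or cycle, so summing gives $0\le f\cdot|S^*\setminus S|+C^*-C$, which is the lemma.

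Separately, your closing remark that ``no balance is needed'' is wrong. Discarding all of $Z_{S^*}$ in every test would leave $2z-|Z_S(i^*)|$ outliers and would count the savings on $Z_{S^*}$ once per $i^*$. The exact matching of counts is what makes the outlier bookkeeping valid.
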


Dabas, Garg, and Gupta introduce a slightly different exchange graph for their outlier analysis; we use that exchange graph. Here, $H = (S \dot\cup S^*, E)$ is a bipartite graph, and we have integer values $x(i,i^*) \geq 0$ for $(i,i') \in E$. We state this slightly differently, but it follows directly from their construction
(say, by deleting the dummy nodes introduced for outliers and simply handling the outliers explicitly in property (2) below). The properties of $H$ in this case are

\begin{enumerate}
\item $H$ is a forest.
\item $\sum_{(i,i^*) \in E} x(i,i^*) \cdot d(i,i^*) + \sum_{j \in Z_S} d(j, \sigma_{S^*}(j)) + \sum_{j \in Z_{S^*}} d(j,\sigma_S(j)) \leq C + C^*$.
\item $x(\delta(i)) = |\sigma_S^{-1}(i)| - |Z_{S^*}(i)|$ for $i \in S$ and $x(\delta(i^*)) = |\sigma_{S^*}^{-1}(i^*)|\cdot |- Z_S(i^*)|$ for $i^* \in S^*$.
\item For each $i \in S \cap S^*$, $(i,i) \in E$ and one of these copies of $i$ in $H$ is a leaf. 
\end{enumerate}

The total cost of the exchange graph is at most $C+C^*$.
Again, this follows immediately from their construction. Our phrasing is just a different view that is convenient for our algorithm.

\subsection{Bounding the Facility Cost}
If we were to form test swaps in the same way as in our analysis of \ucfl but also apply the outlier changes associated with each swap, then the main reason we cannot use them to generate a similar bound on $f \cdot |S|$ is that the number of outliers used in the analysis of the swaps may exceed $z$.
To fix this, we use the idea from Friggstad et al.~\cite{ZacOutlier} to combine the test swaps into larger groups each of which uses $\leq z$ outliers. There may be some overlap in these groups, but it will be limited enough to get an effective bound on $f \cdot |S|$.

Let $\rho = (1/\epsilon)^{2/\epsilon}$ be the same constant as in the previous section; we will eventually set $\rho'$ to be a constant larger than $\rho$.
Almost everything from the previous section still works here. We define $S_s, S_\ell, S^*_s, S^*_\ell$ as before and we still get the bound $f \cdot |S_\ell| \leq \epsilon\cdot |S^*|$, i.e. Lemma \ref{lem:big} works without adjustment in our setting.
We form $H'$ by deleting the nodes $i \in S_\ell$ (keeping their copies in $S^*$ if they are there), select $\Delta$ and form $B_\Delta$, and decompose $H'$ into subtrees $\mathcal T$ as before using this same value of $\epsilon$. Lemma \ref{lem:tech} holds without modification.

We now consider outliers. For each tree $\tau \in \mathcal T$ we create two sets $Z_S(\tau) \subseteq Z_S$ and $Z_{S^*}(\tau) \subseteq Z_{S^*}$ as follows.
For each $j \in Z_S$, pick any single tree $\tau \in \mathcal T$ with $\sigma_{S^*}(j) \in \tau$ and add $j$ to $Z_{S}(\tau)$. For each $j \in Z_{S^*}$ with $\sigma_{S}(j) \in S_s$, pick the unique tree $\tau \in \mathcal T$ with $\sigma_{S}(j) \in \tau$ and add $j$ to $Z_{S^*}(\tau)$. Note the latter step only considers clients that are assigned to a small facility in $S$. Therefore, $\{Z_S(\tau) : \tau \in \mathcal T\}$ partitions $Z_S$ and $\{Z_{S^*}(\tau) : \tau \in \mathcal T\}$ is a partition of a subset of $Z_{S^*}$. Clients $j \in Z_{S^*}$ not in a part (i.e. assigned to some $S_\ell$ by $\sigma_S$) are effectively ignored in the rest of this proof. The idea is that when we swap facilities in $\tau$, $Z_S(\tau)$ and $Z_{S^*}(\tau)$ will be the clients that change their outlier status in this swap.

We first consider what happens if we try to perform the swaps based on $\tau \in \mathcal T$ and their associated outlier sets. That is, if we were to close all $\tau \cap S$ and open all $\tau \cap S^*$ as before, we could upper bound the reassignment cost as follows. We could first make $Z_{S^*}(\tau)$
outliers and each $j \in Z_{S}(\tau)$ could be assigned to $\sigma_{S^*}(j)$ instead. The remaining clients located at centres in $T$ could be reassigned as per the $x$-values on the edges as in the previous section. If this left at most $z$ outliers, then this reassignment would place a valid upper bound on {\bf swap}($\tau \cap S^*, \tau \cap S$), and the cost analysis would be the same as in the previous section, except for the cost change for outliers, which is handled by property (2) of the exchange graph $H$ in this section.

To fix the issue that this way of analyzing swaps might not be ``balanced'' in terms of the outlier difference, we follow a strategy similar to that of \cite{ZacOutlier} by merging swaps, though there are some differences.

\subsection{Grouping Subtrees}

First, we partition $\mathcal T$ into groups as follows. Initially, let each $\tau \in \mathcal T$ be in a group of its own. Then, while there are two groups having at most $\rho/2$ facilities from $S^*$, merge them. Afterwards, if there is a group that has fewer than $\rho/2$ facilities from $S^*$, merge it with any group (there must be another group, otherwise $|S^*| \leq \rho$ so local search would have trivially found the optimal solution). Each group now has between $\rho/2$ and $3\rho/2$ facilities.

Let $\mathcal P$ be the resulting partition. Each $P \in \mathcal P$ is a collection of trees from $\mathcal T$. We use the notation $P \cap S$ to denote the $i \in S$ that appear among the trees in $P$ and $P \cap S^*$ to denote the $i \in S^*$ that appear at least once among the trees in $P$.
Let $Z_S(P) = \cup_{\tau \in P} Z_S(\tau)$ and $Z_{S^*}(P) = \cup_{\tau \in P} Z_{S^*}(\tau)$. Finally, let $\Delta_P = |Z^*_S(P)| - |Z_S(P)|$. The idea is that if we were to perform {\bf swap($P \cap S^*, P \cap S$)} and bound the cost using the tree edges plus connecting $Z_S(P)$ and letting $Z_{S^*}(P)$ be outliers, the number of outliers would be $z + \Delta_P$. Thus, if $\Delta_P \leq 0$, this would lead to a valid reassignment.
We also let
\[ E(P) = \{(i,i^*) \in E : (i,i^*) \text{ is an edge of some } \tau \in P\}.\]
Note no $(i,i^*) \in E$ is in more than one set of the form $E(P)$ since it appeared in at most one subtree $\tau \in \mathcal T$.

Let $\mathcal P^+ = \{P \in \mathcal P, \Delta_P > 0\}$, $\mathcal P^- = \{P \in \mathcal P, \Delta_P < 0\}$ and $\mathcal P^0 = \{P \in \mathcal P, \Delta_P = 0\}$.
We first show how to derive a bound on the reassignment cost by using swaps in $\mathcal P^0$.
\begin{lemma}\label{lem:zero}
If $\rho' \geq 3/2 \rho$, then for each $P \in \mathcal P^0$ we have
\[ f \cdot |P \cap S| \leq f \cdot |P \cap S^*| + \sum_{(i,i^*) \in E(P)} x(i,i^*) \cdot d(i,i^*) +  \sum_{j \in Z_{S^*}(P)} d(j,\sigma_S) \]
\end{lemma}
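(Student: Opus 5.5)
We would prove this by exhibiting, for a fixed $P \in \mathcal P^0$, a single test operation {\bf swap}$(P \cap S^*, P \cap S)$ together with an explicit feasible reassignment whose cost change is exactly the right-hand side minus the left-hand side; local optimality of $S$ then forces this change to be nonnegative. First we check that the swap is admissible. Each group in $\mathcal P$ has at most $3\rho/2$ facilities from $S^*$, so $|P \cap S^*| \leq 3\rho/2 \leq \rho'$. Here $P \cap S^*$ counts each facility once even if it lies in several trees of $P$, and that is what makes the opening-cost term $f \cdot |P \cap S^*|$ (rather than a sum over trees) correct. Closing $T = P \cap S$ saves exactly $f \cdot |P \cap S|$, since every $i \in S_s$ lies in exactly one tree of $\mathcal T$ and hence in at most one group. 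For $i \in S \cap S^*$ that is already open, the $+f$ is an overestimate, as discussed in the \ucfl analysis.

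Next we describe the reassignment. The clients in $Z_{S^*}(P)$ are all assigned by $\sigma_S$ to facilities of $P \cap S$, since they were placed in $Z_{S^*}(\tau)$ only when $\sigma_S(j) \in \tau$; we declare them outliers. Each client $j \in Z_S(P)$ is connected to $\sigma_{S^*}(j)$, which lies in some $\tau \in P$ and hence in $P \cap S^*$. For each $i \in P \cap S$, the clients in $\sigma_S^{-1}(i) - Z_{S^*}(i)$ number exactly $x(\delta(i))$ by property (3). We route $x(i,i^*)$ of them to each neighbour $i^*$. All neighbours of $i$ lie in the same tree $\tau$, and hence in $P \cap S^*$, exactly as in the \ucfl argument. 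Each of these clients pays at most an extra $d(i,i^*)$ by the triangle inequality. This accounts for every client of $\sigma_S^{-1}(P \cap S)$. The outlier count becomes $z + |Z_{S^*}(P)| - |Z_S(P)| = z + \Delta_P = z$, so the reassignment is valid within the swap operation's rules.

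The step I expect to be the main obstacle is capacities, together with the bookkeeping of costs. Each $i^* \in P \cap S^*$ receives $x$-flow only from its neighbours in $P \cap S$ plus the clients $j \in Z_S(P)$ with $\sigma_{S^*}(j) = i^*$. By property (3), the total is at most $x(\delta(i^*)) + |Z_S(i^*)| \leq |\sigma_{S^*}^{-1}(i^*)| \leq u_{i^*}$. The subtle point is that $i^*$ may appear in several groups or trees, but within one swap it is opened once and only receives flow along edges of $P$, so the bound holds.

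For the costs, declaring $j \in Z_{S^*}(P)$ an outlier saves $d(j,\sigma_S(j))$ and so only helps. The accounting must therefore match the statement's form, in which the term $\sum_{j \in Z_{S^*}(P)} d(j,\sigma_S)$ appears on the right, presumably charged through property (2). Similarly, connecting $j \in Z_S(P)$ costs $d(j,\sigma_{S^*}(j))$, and this must be absorbed into that sum term, which I would reconcile with property (2)'s outlier sums. With the cost change at least $0$, rearranging gives the inequality.
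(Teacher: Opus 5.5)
Your swap, your reassignment, your feasibility check (every $H$-neighbour of a closed $i \in P \cap S$ lies in $P \cap S^*$) and your capacity bound via property (3) are exactly the paper's proof, and those steps are fine.

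The problem is your last paragraph. You cannot ``absorb'' the cost $\sum_{j \in Z_S(P)} d(j,\sigma_{S^*}(j))$ of connecting the reclaimed outliers into $\sum_{j \in Z_{S^*}(P)} d(j,\sigma_S(j))$ by appealing to property (2). Property (2) is a single aggregate upper bound by $C + C^*$. It gives no per-part comparison between these two sums, which concern disjoint sets of clients, and nothing in the setup bounds one by the other. In fact property (2) plays no role inside this lemma at all.

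What your reassignment proves is
\[ f \cdot |P \cap S| \leq f \cdot |P \cap S^*| + \sum_{(i,i^*) \in E(P)} x(i,i^*) \cdot d(i,i^*) + \sum_{j \in Z_S(P)} d(j,\sigma_{S^*}(j)). \]
The paper's proof yields the same thing, since it explicitly drops the savings from discarding $Z_{S^*}(P)$. The last term of the statement as printed should be read this way. Compare \eqref{eqn:final}, whose outlier term $d(\kappa(j),\sigma_{S^*}(\kappa(j)))$ is likewise the connection cost of reclaimed clients from $Z_S$.

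This corrected form is all that is used downstream. The sets $Z_S(P)$ are disjoint, so summing over parts charges these terms to the $\sum_{j \in Z_S} d(j,\sigma_{S^*}(j))$ part of property (2). Drop the reconciliation attempt, state the bound with the $Z_S(P)$ term, and the rest of your argument stands as written.
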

\begin{proof}
Note that $|P \cap S^*| \leq 3/2 \rho$, so we can consider {\bf swap}($P \cap S^*, P \cap S$). This opens each $i \in P \cap S^*$ and closes each $i \in P \cap S$. If some $i$ is in both sets, we may regard it as both closed and then opened, so it contributes to both sides of the bound to be proven.

To bound the optimal reassignment cost for this swap, we move clients $j \in Z_S(P)$ to their open facility in $P \cap S^*$ and have clients $j \in Z_{S^*}(P)$ now be outliers (we save their connection cost in $S$, but that is not needed for the bound).
The remaining clients in $\sigma_S(P \cap S) - Z_{S^*}(P)$ are reassigned according to $x$, so exactly $x(i,i^*)$ clients are moved from $i \in T$ to $i^* \in S'$ for each $(i,i^*) \in E(P)$ and this accounts for all client movement.
This is a feasible reallocation because all edges incident to each $i \in P \cap S$ are in $E(P)$, meaning all corresponding $i^*$ are opened. Together with property (3) of $H$, this guarantees there is enough room in each $i^* \in P \cap S^*$ to accommodate all incoming clients (the ones being reassigned from $P \cap S$ and the ones in $Z_S$ being assigned there). Since this places an upper bound on the optimal reassignment and since the swap operation did not result in a cheaper solution, the bound follows.
\end{proof}

If $\mathcal P = \mathcal P^0$, i.e. all parts were balanced, we would be done because summing this bound over all $P \in \mathcal P$ would make the left-hand side sum to exactly $f \cdot |S_s|$, the first term on the right would be $(1+\epsilon) f \cdot |S^*| + 2\epsilon \cdot f \cdot |S_s|$ as before, and the sum over $(i,i^*) \in E(P)$ would be at most the $x$-weighted cost of $H$, which, along with the total outlier bound, is at most $C+C^*$. That is, we would get the exact same bound as in the non-outliers setting.
In the next section, we discuss how to handle the case when there are some imbalanced parts.

\subsection{Managing the Imbalanced Parts}
Finally, we fix $\rho' := 3\rho/\epsilon$. Suppose, for now, that $\sum_{P \in \mathcal P} \Delta_P = 0$, i.e. all $j \in Z_{S^*}$ have $\sigma_S(j) \in S_s$ and hence are accounted for in $\mathcal P$. In \cite{ZacOutlier}, a bipartite graph $Q$ with sides $\mathcal P^+, \mathcal P^-$ and a bijection $\kappa : Z_{S^*} \rightarrow  Z_S$ is constructed. We use the following properties of this construction:
\begin{itemize}
\item The graph $Q$ is a caterpillar, i.e. a single ``central'' path plus some leaf nodes appended to the path.
\item For each $j \in Z_{S^*}$, the parts $P$ and $P'$ with $j \in Z_{S^*}(P)$ and $\kappa(j) \in Z_S(P')$ are either equal or are connected by an edge in $Q$.
\item The edges of $Q$ admit a natural linear ordering: the edges of the central path are ordered according to an arbitrary direction of the path, and all leaves incident to an internal part $P$ are ordered arbitrarily between the two path edges incident to $P$.
\item For any group $G_i$, let $P(G_i)$ be the parts incident to edges of $G_i$. The boundary parts
$B_i \subseteq P(G_i)$, namely those incident to edges outside $G_i$, satisfy $|B_i|\le 2$.

\item The local facilities in $B_i$ decompose into consecutive blocks in the left-to-right ordering. By the construction of $\kappa$, each block is incident to a single group $G_j$, and different blocks are internally disjoint, overlapping only possibly at boundary local facilities. Let $B'_{G_i}$ be the set of facilities in the blocks incident to $G_i$, excluding the two boundary facilities.
\end{itemize}
See Figure~\ref{fig:caterpillar} for an illustration of $Q$ and some concepts defined below.

\begin{figure}[h] 
    \centering
    \includegraphics[width=0.8\textwidth]{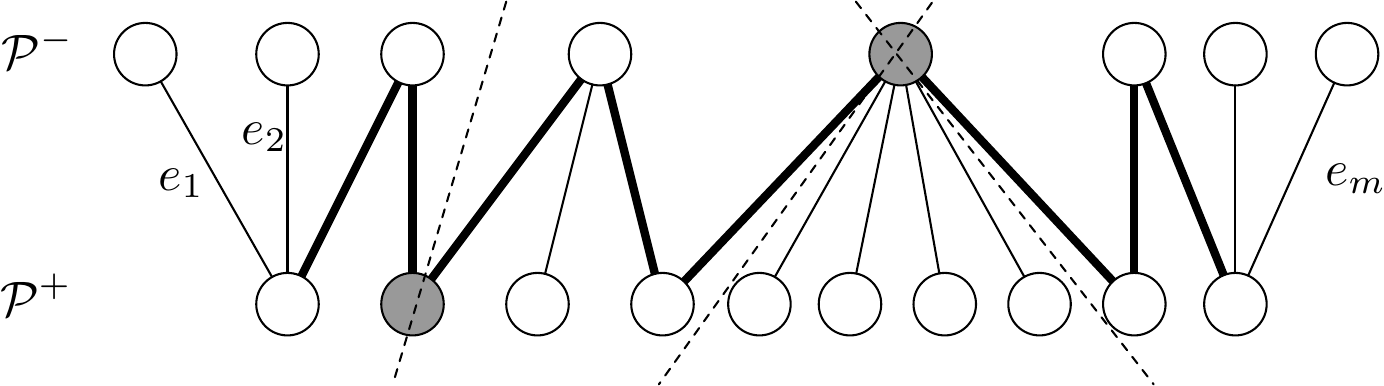} 
    \caption{The caterpillar graph $Q$. The central path is depicted with bold, the ordering of the edges is depicted ``left to right''. Groups $G_i$ formed using $1/\epsilon = 4$ are delimited by dashed lines, the shaded nodes correspond to parts that are in at least one boundary set $B_i$.}
    \label{fig:caterpillar} 
\end{figure}

In our case, we only have that $\sum_P \Delta_P \leq 0$. Still, the exact same construction works except $\kappa$ is only guaranteed to be one-to-one but not necessarily onto. This is a straightforward verification, the details are skipped.

Label the edges of $Q$ according to this ordering as $e_1, e_2, \ldots, e_m$. Finally, partition these edges into groups of $1/\epsilon$ consecutive edges except the last group, which should have between $1/\epsilon$ and $2/\epsilon$ edges (if the number of edges $m$ was less than $1/\epsilon$, then $|S^*| \leq 3\rho /(2\epsilon)$ so the $\rho'$-swap algorithm would have found an optimal solution). Call these groups $G_1, G_2, \ldots, G_{m'}$.

For a group $G_i$, let $E(G_i)$ be all edges of $H$ appearing in the parts of $P(G_i)$, and consider
\[
{\bf swap}\bigl(P(G_i)\cap S^*,\, (P(G_i)-B_i+B'_{G_i})\cap S\bigr).
\]
This operation opens every facility of $S^*$ that appears in a tree contained in a part of $P(G_i)$, and closes every facility of $S$ in $P(G_i)$ except those lying in a boundary part but not in $B'_{G_i}$.
Note that $|P(G_i) \cap S^*| \leq \rho'$ since there are at most $1/\epsilon + 1 \leq 2/\epsilon$ parts in each $P(G_i)$ and each has at most $3\rho/2$ facilities in $S^*$. To bound the cost change, consider the following reassignment:
\begin{itemize}
\item For every $j \in \cup_{P \in P(G_i)} Z_{S^*}(P)$ with $\kappa(j) \in  \cup_{P \in P(G_i)} Z_S(P)$ as well, make $j$ an outlier and instead connect $\kappa(j)$ to $\sigma_{S^*}(\kappa(j))$, which is now open.
\item All other $j$ with $\sigma_S(j) \in \cup_{P \in P(G_i)} P \cap S$ are redistributed according to $x$ to open facilities.
\end{itemize}
Recall that, for every facility that is closed, all of its neighbours in $H$ are opened because the groups are, ultimately, unions of trees $\tau \in \mathcal T$. Thus the redistribution is valid. Also note that, for every $i$ that is closed, each $j \in Z_{S^*}(i)$ has $\sigma_{S^*}(\kappa(j))$ open because we only close facilities if every neighbour (in $Q$) of its part is also involved in the swap. Thus, for every outlier we create we also reclaim one; therefore, the number of outliers after this reassignment has not increased. The cost of the described reassignment places an upper bound on the test swaps' cost. Let $B'_B$ denote the set of all boundary facilities, that is $B'_B=\left(\bigcup_iB_i\right)\setminus\left(\bigcup_{G_i}B'_{G_i}\right)$. We immediately get the following bound:
\begin{equation}\label{eqn:final}
 f \cdot \left|\left(P(G_i) \cap S\right)\setminus B_B'\right|\leq f \cdot |P(G_i) \cap S^*| + \sum_{(i,i^*) \in E(G_i)} x(i,i^*) d(i,i^*) + \sum_{j \in Z_{S^*}(P(G_i))} d(\kappa(j),\sigma_{S^*}(\kappa(j)))
 \end{equation}
where we have, again, ignored the savings from no longer assigning $j \in Z_{S^*}(P(G_i))$.

Now the only facilities that are never closed are the facilities in $B'_B$. The maximum number of such intervals over all groups is at most $\epsilon|S^*|$, and thus the maximum number of facilities that are never closed is at most $2\epsilon|S^*|$. Then $|B'_B|\leq 2\epsilon\cdot |S^*|$

Summing over all groups, we almost get the same bound as in the case of \ucfl. The only difference is that some parts $P$ are opened multiple times, i.e. whenever they are part of some $B_i$. But in every group, at least $1/\epsilon - 2 \geq 1/(2\epsilon)$ parts are not boundary parts. Since each part has size between $\rho/2$ and $3\rho/2$, in each group $i$ the number of facilities of $S^*$ in one of the at most two parts in $B_i$ is at most $3\rho$ and the number of facilities of $S^*$ (counted with multiplicity if they lie in multiple parts in the group) is at least $\rho/(4\epsilon)$.

Therefore, the total number of facilities we swap in over all boundary parts $B_i$ is at most
\[ 12\epsilon \cdot \sum_{\tau \in \mathcal T} |\tau \cap S^*| \leq 12\epsilon \cdot \left((1+\epsilon) \cdot |S^*| + 2\epsilon |S_s|\right) = O(\epsilon) (|S^*| + |S|)\]
where the last inequality follows from the analysis of \ucfl.

To conclude, if we sum \eqref{eqn:final} over all groups and apply Lemma \ref{lem:zero} to $\mathcal P^0$, we get
\[ f \cdot |S_s\setminus B'_B| \leq (1+O(\epsilon)) f \cdot |S^*| - O(\epsilon) \cdot |S_s| + C + C^* \leq (2+O(\epsilon)) f \cdot |S^*| - O(\epsilon) \cdot |S_s| + 2C^* \]
recalling property (2) of $H$ when aggregating all of the charges to $x$ and outlier connections.

Rearranging and adding the bounds $|B'_B|\leq 2\epsilon\cdot |S^*|$, $f \cdot |S_\ell| \leq \epsilon \cdot |S^*|$, and $C \leq C^* + F^*$ shows
\[ C+F \leq (3 + O(\epsilon)) (F^* + C^*) \]
as required.

\section{Tight Examples for Local Search}\label{sec:tight}
We have a simple example showing that our local search analysis is tight; see Figure~\ref{fig:tightexampleMulti}. The circles and rectangle are facilities, and the triangles are clients. All facilities have opening cost $2$, and the two circle facilities form a local optimum. Adding a facility does not decrease the total cost. Moreover, a swap operation also does not decrease the total cost: if we open the rectangle and close the two circle facilities, we save $2$ in opening cost, but rerouting the two clients from the circle facilities to the rectangle has total cost $2$. Thus, there is no decrease in the total cost.

The local optimum has cost $6$: opening the two circle facilities costs $4$, and the routing cost is $2$. The optimal solution has cost $2$: opening the rectangle costs $2$, and the routing cost is $0$. Thus, the ratio is $3$.

One might argue that the swap operation could save $2$ in routing cost if the clients were reassigned directly from their locations. However, the algorithm in \ref{app:dp} that computes the optimal set $T$ to close for a given subset of facilities $S'$ to open measures the assignment cost change in moving a client $j$ from, say, $i \in T$ to $i' \in S'$ as $d(i,i')$ rather than $d(j,i') - d(j,i)$. That is why our {\bf swap} operation for any constant $\rho$ would not find an improving swap in even in this simple example.

\begin{figure}[h] 
    \centering
    \includegraphics[width=1\textwidth]{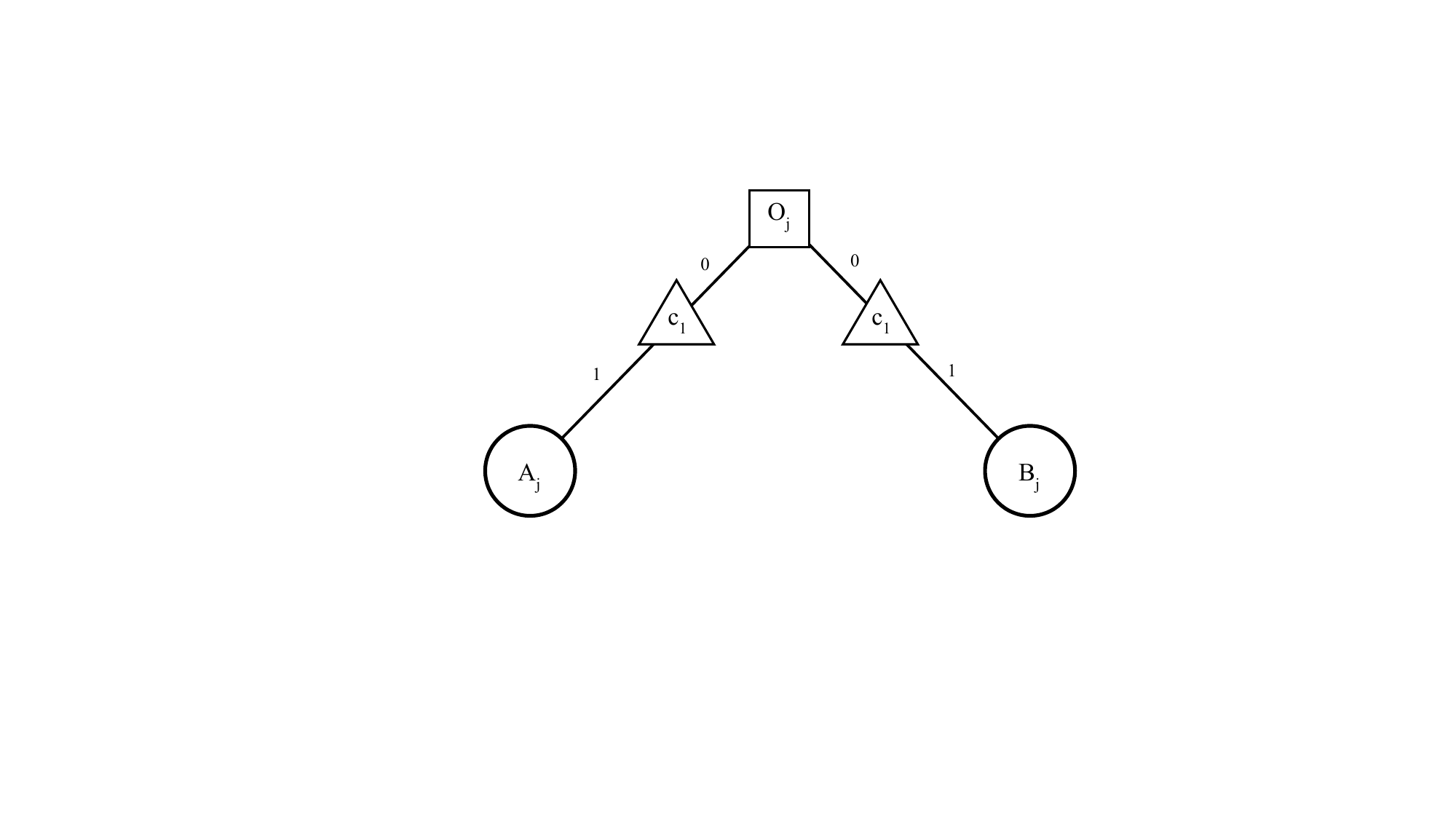} 
    \caption{The tight example.}
    \label{fig:tightexampleMulti} 
\end{figure}

\subsection{A Lower Bound for Non-Uniform Opening Costs}

One can easily check that we can still efficiently search for an improving swap move even if the facility opening costs are non-uniform. So one might wonder if our approach yields improvements for this setting.
Unfortunately the bad examples from previous work for local search for instances of CFL with non-uniform opening costs easily extend to give the same lower bounds for our multiple-swap algorithm. That is, our algorithm does not improve over the 5-approximation by Bansal, Garg, and Gupta \cite{bansal} for CFL with non-uniform opening costs.

Figure~\ref{fig:tightexample} is a modification of the bad locality gap examples from \cite{DBLP:journals/mor/ZhangCY05,bansal}. As in these papers, the rectangular boxes are unopened facilities, the circle boxes are the facilities opened in the local optimum solution, and the triangle boxes are client demands. For the rectangle boxes and circle boxes, the value above is the opening cost, and the value below is the capacity.

It is easy to see that opening a rectangle box does not decrease the solution cost. Consider any swap move that opens at most $\rho + 1$ facilities; we can assume that $\rho \ll n$. First, closing one circle facility and opening the rectangle below the circle together with any other non-root rectangle does not decrease the total cost: closing one circle saves $4$, but rerouting from the circle to another non-root rectangle has cost $4$. Therefore, this move does not decrease the total cost. Finally, if we open the root rectangle, then we would also open $\rho$ bottom rectangles and close all circles in the same subtree. This saves $4\rho$, but opening the root rectangle costs $4\rho$, and there is no saving in the connection cost. Thus, the circle facilities are indeed a local optimum solution.

Now, if we calculate the cost of the local solution, we have opening cost $8n$ for the circle facilities and connection cost $2n$. For the optimal solution, the opening cost for the rectangle facilities is $4\rho$, and the connection cost is $2n$. Thus, the ratio is
\[
\frac{10n}{4\rho + 2n}.
\]
As $n \to \infty$ and $\rho \ll n$, this becomes $5$.

\begin{figure}[h] 
    \centering
    \includegraphics[width=1\textwidth]{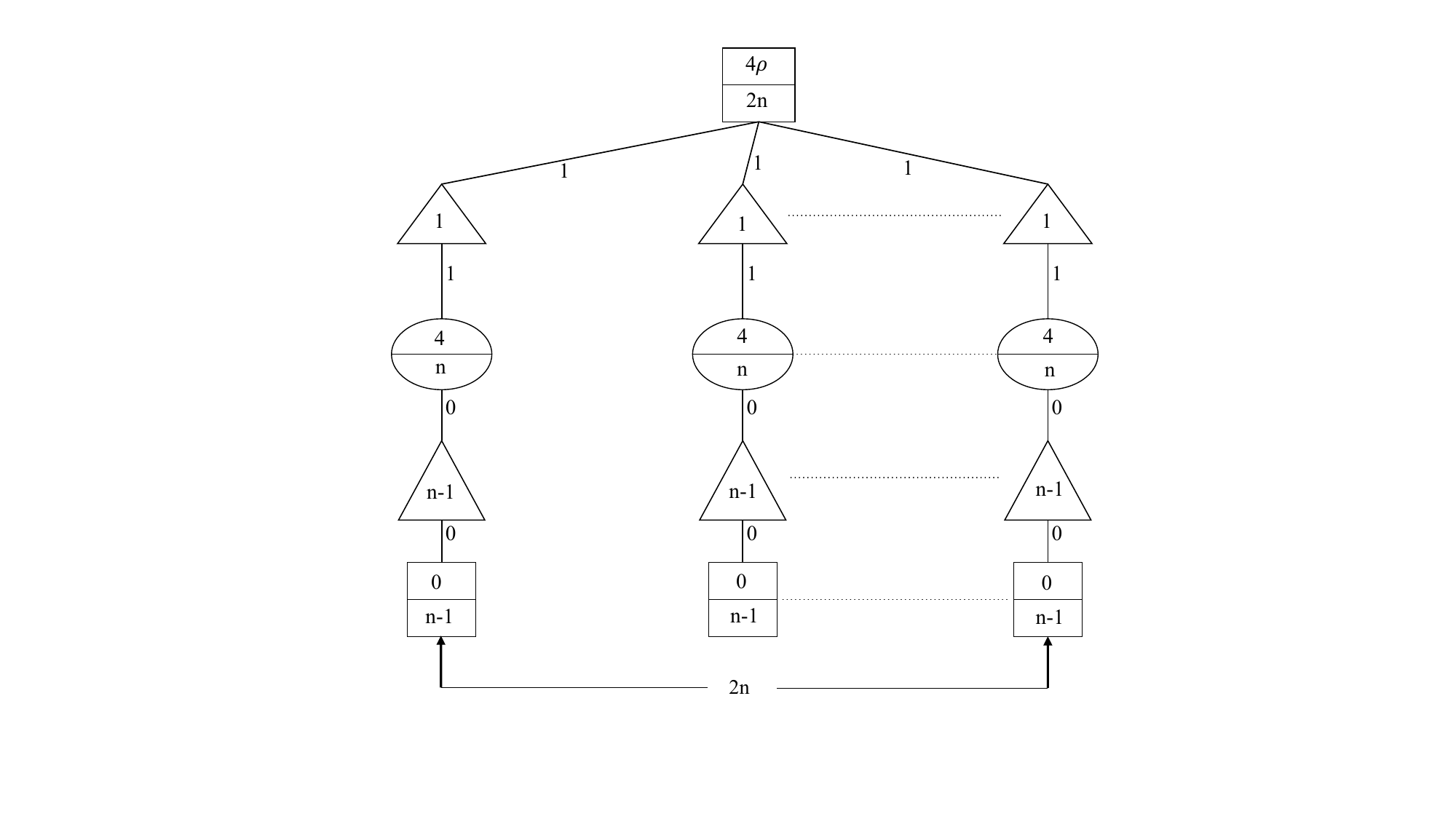} 
    \caption{The tight example.}
    \label{fig:tightexample} 
\end{figure}

\printbibliography


\appendix
\renewcommand{\thesection}{Appendix \Alph{section}}

\section{Finding the best {\bf swap($S',T$)} Operation} \label{app:dp}


\begin{lemma} \label{lemma:CFLSwap}
Given $S' \subseteq V_F$ with $|S'| \leq \rho$, we can find some $T \subseteq S$ minimizing the cost change of $\textbf{swap}(S',T)$ in the \ucfl algorithm in $n^{O(\rho)}$ time.
\end{lemma}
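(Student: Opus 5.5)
Fix $S'$ with $k:=|S'|\le\rho$ and write $S'=\{i'_1,\dots,i'_k\}$. Let $n_i:=|\sigma_S^{-1}(i)|$ for $i\in S$. The cost change of \textbf{swap}$(S',T)$ is
\[
f\cdot|S'\setminus S| \;-\; f\cdot|T\setminus S'| \;+\; \min_{y}\sum_{i\in T}\sum_{t=1}^{k} y_{i,t}\, d(i,i'_t).
\]
The minimum ranges over nonnegative integer $y$ with $\sum_t y_{i,t}=n_i$ for each $i\in T$, and with the loads on each $i'_t$ not exceeding its residual capacity. The residual capacity is $u_{i'_t}$ when $i'_t\notin S$ or $i'_t\in T$, and $u_{i'_t}-n_{i'_t}$ when $i'_t\in S\setminus T$, since that facility keeps its own clients. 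The $O(2^\rho)$ choices of which facilities of $S'\cap S$ lie in $T$ would simply be enumerated as an outer loop. Reassigning a client from $i$ to $i'_t$ is charged $d(i,i'_t)$ irrespective of the client, so a facility $i\in T$ is an item with integer demand $n_i\le n$. That demand can be split integrally among $k$ knapsacks with capacities at most $n$, and per-unit cost $d(i,i'_t)$ applies in knapsack $t$. The gain for including item $i$ is $f$ when $i\notin S'$.

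The plan is a dynamic program over the items $S\setminus S'$ in an arbitrary order $i_1,\dots,i_m$. The state $(r,c_1,\dots,c_k)$ records that the first $r$ items have been processed and that $c_t\le u_{i'_t}$ units of capacity of $i'_t$ have been used. The table stores the minimum of (reassignment cost $-\,f\cdot$number of closed items) over all such partial choices. Facilities of $S'\cap T$ are preloaded as fixed items at the start. There are $m\cdot\prod_t(u_{i'_t}+1)\le n\cdot (n+1)^{\rho}$ states, assuming without loss of generality that $u_i\le |V_C|\le n$.

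When item $i_{r+1}$ is processed, it is either skipped (left open), or it is closed and its $n_{i_{r+1}}$ clients are split as $(a_1,\dots,a_k)$ with $\sum_t a_t=n_{i_{r+1}}$. The resulting cost is $\sum_t a_t d(i_{r+1},i'_t)-f$. Enumerating all splits naively gives $n^{O(\rho)}$ transitions per state, and hence $n^{O(\rho)}$ total time. Alternatively, one can insert item $i$ one unit at a time with an auxiliary ``remaining units of the current item'' coordinate to keep the transitions polynomial, but the naive bound already suffices. The best $T$ is read off from the final layer by minimizing over capacity vectors, adding the constant $f|S'\setminus S|$, and backtracking.

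Correctness is an exchange argument. For any fixed $T$, the optimal reassignment of $\sigma_S^{-1}(T)$ is a transportation problem with integral data, so an optimal integral $y$ exists. That $y$ corresponds to a path through the DP, and conversely every DP path yields a feasible $T$ together with an integral reassignment. The main subtlety, rather than an obstacle, is bookkeeping for $S'\cap S$. A facility in $S'\cap S$ that is not closed reduces its available capacity by its current clients and gives no opening charge. A facility in $S'\cap T$ is treated as closed-and-reopened: it contributes $-f$ and $+f$, which cancel, and its own clients may stay at distance $0$. The outer enumeration over subsets of $S'\cap S$ handles both cases.
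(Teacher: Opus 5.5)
Your proof is correct and is essentially the paper's argument: a dynamic program over the facilities of $S$ whose state is an $|S'|$-dimensional capacity vector, enumerating all integral splits of each closed facility's clients among $S'$, with $n^{O(\rho)}$ states and transitions. The one difference is minor: the paper avoids your $2^{\rho}$ outer enumeration over $S'\cap S$ by always rerouting the clients of every facility in $S'\cap S$ as closed-and-reopened (zero net opening cost), which loses nothing because those clients can stay put at distance $d(i,i)=0$.
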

\begin{proof}
Consider an arbitrary ordering $i_1, i_2, \ldots, i_{|S'|}$ of $S'$. In this proof, all vectors we consider will have dimension $|S'|$ (with indices corresponding to $S'$) and hold nonnegative integer entries. For two such vectors $\vec{\psi}$ and $\vec{u}'$, we let $\vec{\psi} \leq \vec{u}'$ indicate the bound holds pointwise at each coordinate. Also let $i'_1, i'_2, \ldots, i'_{S}$ be an arbitrary ordering of $V_F$ itself. Our goal is to decide which of these to close and reroute to $S'$.

We use dynamic programming; for each $0 \leq j \leq |V_F|$ and each vector $\vec{u}'$ with $\vec{u}'_k \leq u_{i_k}, 1 \leq k \leq |S'|$, let $g(j, \vec{u}')$ be the cheapest way to close a subset $T' \subseteq \{i'_1, \ldots, i'_j\}$ and reroute $\sigma_S^{-1}(T')$ to $S'$ ensuring each $i_a \in S'$ receives at most $\vec{u}'_a$ clients this way. Of course, the final solution will be $g(|S|, \vec{u})$ where $\vec{u}_k = u_{i_k}$ for each $1 \leq k \leq |S'|$.

For brevity, for each $i'_j$ we let $\Psi_j$ be all vectors $\vec{\psi}$ such that $\sum_{k=1}^{|S'|} \vec{\psi}_k = |\sigma^{-1}_S(i'_j)$. That is, all possible ways to redistribute the $|\sigma^{-1}_S(i'_j)|$ clients currently assigned to $i'_j$ to other locations in $S'$. Note, $|\vec{\psi}| \leq n^\rho$.
For each $\vec{\psi} \in \Psi_j$, let $d_j(\vec{\psi}) = \sum_{k=1}^{|S'|} \vec{\psi}_k \cdot d(i'_j, i_k)$ be the total cost of moving the clients currently assigned to $i'_j$ to $S'$ according to $\vec{\psi}$.

\[
g(j, \vec{u}') = 
\begin{cases}
f \cdot (|S'-S|) & \text{ if } j = 0\\
\displaystyle \min_{\vec{\psi} \in \Psi_j, \vec{\psi} \leq \vec{u}'} g(j-1, \vec{u}' - \vec{\psi}) +  d_j(\vec{\psi}) & \text{ if } j \geq 1 \text{ and } i'_j \in S' \\
\displaystyle \min\left\{ g(j-1, \vec{u}'), -f + \min_{\vec{\psi} \in \Psi_j, \vec{\psi} \leq \vec{u}'} g(j-1, \vec{u}' - \vec{\psi}) +  d_j(\vec{\psi})\right\} & \text{ if } j \geq 1 \text{ and } i'_j \not\in S' \\
\end{cases}
\]

That is, if there are no further facilities in $S'$ to consider then we just have to account for the opening cost of the facilities in $S'$ that are not already open. If $i'_j \in S'$ as well then it is already open so we just try all possible reassignments of its clients.
Finally, if $i'_j \notin S'$ then we either try keeping it open or try closing it and then trying all possible client reassignments. We comment that it could be that in the two recursive cases that there is no $\vec{\psi} \in \Psi_j$ with $\vec{\psi} \leq \vec{u}'$, i.e. the leftover capacity is not able to accommodate all clients in $\sigma^{-1}(i'_j)$. In this case, an empty $\min$ takes the value $\infty$.

Since there are $n^{O(\rho)}$ subproblems and each takes $n^{O(\rho)}$ time to process, the total running time is $n^{O(\rho)}$. The recurrence only computes the minimum possible value, but the optimal $T$ for a {\bf swap}$(S',T)$ move can be recovered through an optimal trace through the dynamic programming table.
\end{proof}

\begin{lemma}
    Given $S' \subseteq V_F$ with $|S'| \leq \rho'$, we can find some $T \subseteq S$ minimizing the cost change of {\bf swap}($S',T)$ in the \ucflo algorithm in time $n^{O(\rho')}$.
\end{lemma}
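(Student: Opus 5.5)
We extend the dynamic program from the proof of Lemma~\ref{lemma:CFLSwap} by adding one integer counter that tracks the outlier budget. Fix an ordering $i_1,\dots,i_{|S'|}$ of $S'$ and an ordering $i'_1,i'_2,\ldots$ of $S$. The state is $(j,\vec u',w)$, where $\vec u'$ is the remaining capacity vector on $S'$ as before. The new coordinate $w$ is an integer in $[-n,n]$. It records the net change in the number of outliers caused by decisions on $i'_1,\dots,i'_j$: it increases for each client of a closed facility that we discard, and decreases for each client of $Z_S$ that we pull into $S'$. The value $g(j,\vec u',w)$ is the minimum cost change achievable with these parameters. The number of states is $n^{O(\rho')}\cdot O(n)$, which is still $n^{O(\rho')}$.

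\textbf{Transitions for facilities $i'_j\in S$.} If $i'_j\notin S'$, we either keep $i'_j$ open, leaving the state unchanged, or close it at cost $-f$. When we close it, we choose a vector $\vec\psi\le\vec u'$ with $\sum_k\vec\psi_k=a\le|\sigma_S^{-1}(i'_j)|$. This sends $a$ of its clients to $S'$ at cost $d_j(\vec\psi)$ and discards the remaining $|\sigma_S^{-1}(i'_j)|-a$ clients.

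Which clients should be discarded? Under the swap's cost accounting, each reassigned client from $i'_j$ pays the additional distance $d(i'_j,i_k)$. Discarding a client therefore saves its current connection cost $d(j,i'_j)$. Since all clients at $i'_j$ are otherwise interchangeable, the best choice is to discard the $|\sigma_S^{-1}(i'_j)|-a$ clients with the largest $d(j,i'_j)$. We subtract that sum, which can be precomputed after sorting. If $i'_j\in S'$, the facility stays open. We still allow reassigning and discarding its clients, which is consistent with the non-outlier version, and we omit the $-f$ term.

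\textbf{Handling the clients of $Z_S$.} These clients are not attached to any facility of $S$, so they must be processed separately. Newly connecting a client $c\in Z_S$ to $i_k\in S'$ costs $d(c,i_k)$ and decreases $w$ by one.

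Rather than threading them through the main recursion, we precompute a table $h(\vec v)$. For each capacity vector $\vec v\le\vec u$, it gives the minimum cost of connecting exactly $\|\vec v\|_1$ clients of $Z_S$ so that $i_k$ receives exactly $\vec v_k$ of them. This is a min-cost transportation (bipartite $b$-matching) instance, so it is solvable in polynomial time for each of the $n^{O(\rho')}$ vectors $\vec v$.

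The overall answer is then
\[
\min\ \{\, g(|S|,\vec u-\vec v,w)+h(\vec v)\ :\ \vec v\le \vec u,\ w-\|\vec v\|_1\le 0 \,\},
\]
where the base case $g(0,\cdot,0)=f\cdot|S'-S|$ carries the opening cost. The constraint $w-\|\vec v\|_1\le 0$ says exactly that the total number of outliers stays at most $z$.

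\textbf{Main obstacle: justifying the decomposition.} We must argue that this decomposition loses no optimality. An optimal swap is determined by three things: the set $T$, how many clients of each $i\in T$ go to each $i_k$, and which clients of $Z_S$ are connected where. Its cost separates additively into per-facility terms and a $Z_S$ term. These terms interact only through the capacity vector and the outlier count. The capacity interaction is captured by splitting $\vec u$ into $\vec u-\vec v$ and $\vec v$, and the outlier interaction by the counter $w$.

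The exchange argument showing that the largest-distance clients should be discarded at a closed facility needs care. Because reassigned clients are charged $d(i'_j,i_k)$ independently of which particular client moves, the choice of which clients to discard affects only the saved connection costs, and so discarding the farthest ones is optimal. Enumerating all these parameters gives total running time $n^{O(\rho')}$, and the optimal $T$ is recovered by tracing back through the table.
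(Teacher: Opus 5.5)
Your proposal is correct and follows essentially the same route as the paper. The paper likewise guesses a split $\vec u^1+\vec u^2=\vec u$ (your $\vec v$ and $\vec u-\vec v$), connects exactly $\vec u^1_k$ clients of $Z_S$ to each $i_k$ via a min-cost transportation computation, and runs the \ucfl dynamic program with residual capacities $\vec u^2$ plus an outlier-budget parameter. Your rule of discarding the clients of $i'_j$ with the largest $d(j,i'_j)$ is valid because reassigned clients are charged $d(i'_j,i_k)$ regardless of identity, and it fills in a detail the paper leaves as ``straightforward.''
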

\begin{proof}
Let $\vec{u}$ be the vector of capcities of facilities in $S'$. First, we guess vectors $\vec{u^1}, \vec{u^2}$ with nonnegative integer vectors, i.e. such that $\vec{u^1} + \vec{u^2} = \vec{u}$. The algorithm returns the best solution computed over all capacities.

For a particular guess, we compute a minimum-cost partial mapping from $Z_S$ to $S'$ such that exactly $\vec{u}^1_k$ clients from $Z_S$ are mapped to $i'_k$. If there is no such assignment, i.e. $\sum_k \vec{u}^1_k > z$, then we skip this guess.

Finally, we run the dynamic programming algorithm from the previous proof with minor modifications. We start with initial capacities $\vec{u}^2$ and we carry an additional parameter $z'$ that is initially $\sum_k \vec{u}^1_k$ (the number of outliers we reclaimed via $\vec{u}^1$). That is, we let $g'(j, \vec{u}', z')$ be the cheapest way to assign clients in $\sigma^{-1}_S(\{i'_1, i'_2, \ldots, i'_j\})$ to $S'$ (respecting capacities $\vec{u}'$), allowing for $z'$ of them to be outliers instead.

The recurrence in this case is nearly identical to the previous Lemma except we also guess in each step how many clients from $\sigma^{-1}(i'_j)$ to make outliers and adjust the parameters appropriately when we recurse. The straightforward details are omitted
\end{proof}

\end{document}